\documentclass[10pt, conference]{IEEEtran} 
\IEEEoverridecommandlockouts

\def\BibTeX{{\rm B\kern-.05em{\sc i\kern-.025em b}\kern-.08em
    T\kern-.1667em\lower.7ex\hbox{E}\kern-.125emX}}
\usepackage[utf8]{inputenc}
\usepackage{graphicx}
\usepackage[justification=centering,font=small]{caption}
\usepackage{subcaption}
\usepackage{amsmath, amssymb}
\usepackage{bm}
\usepackage{optidef}
\usepackage{amsthm}

\usepackage{amsmath, amssymb, amsfonts, amsthm}
\usepackage{cancel}
\usepackage{bm}
\usepackage{cite}
\usepackage{graphicx}
\usepackage{algorithm}
\usepackage{algorithmic}
\usepackage{color}
\usepackage{amssymb}
\usepackage{booktabs}
\usepackage{algorithm}
\usepackage{algorithmic}

\newtheorem{theorem}{Theorem}

\allowdisplaybreaks
\newtheorem{lemma}{Lemma} 
\newtheorem{proposition}{Proposition}
\allowdisplaybreaks[4]

\begin{document}
\title{Energy-Efficient Resource Allocation for Six-Dimensional Movable Antenna Systems} 


\author{Ziyun Zhang, Ruotong Zhao, Shaokang Hu, and Derrick Wing Kwan Ng,~\IEEEmembership{Fellow,~IEEE} \\
\thanks{The work of Z. Zhang and D. W. K. Ng was supported by the Australian Research Council's Discovery Projects (DP260100642). The work of R. Zhao was supported in part by the Commonwealth through an Australian Government Research Training Program Scholarship [DOI: https://doi.org/10.82133/C42F-K220] and the Australian Research Council's Discovery Projects (DP240101019). The work of S. Hu was supported by the Australian Research Council's Discovery Projects (DP240101019).}
\IEEEauthorblockA{School of Electrical Engineering and Telecommunications, University of New South Wales, Sydney, NSW 2052, Australia
}}

\maketitle
\begin{abstract}

This paper investigates the energy-efficiency (EE) maximization problem for a multiuser wireless network equipped with six-dimensional movable antennas (6DMAs), where the three-dimensional (3D) positions and  orientations of the antennas are jointly optimized to fully exploit the additional spatial degrees of freedom offered by dynamic channel reconfiguration. However, the practical operation of 6DMAs incurs non-negligible mechanical energy consumption. Moreover, orientation-dependent phase variations, together with the strong coupling among antenna positions, rotation angles, transmit beamforming, and time allocation, render the resulting problem highly non-convex and analytically challenging. To address this issue, we develop a block coordinate descent (BCD) optimization framework that integrates Dinkelbach’s transformation with the majorization-minimization (MM) approach to efficiently obtain a high-quality suboptimal solution with guaranteed convergence. Simulation results unveil that the proposed design achieves significant EE improvements over conventional benchmarks, thereby highlighting the critical importance of accounting for practical mechanical energy costs in 6DMA-enabled systems. Furthermore, our results reveal a fundamental trade-off between throughput enhancement and mechanical overhead: although larger antenna reconfigurations can improve channel conditions, their EE gains gradually diminish due to the increased mechanical energy consumption.

\end{abstract}
\large\normalsize
\section{Introduction}
The upcoming sixth-generation (6G) wireless networks are expected to deliver unprecedented spectral efficiency and energy efficiency (EE) to satisfy rapidly escalating data demands under stringent power constraints~\cite{shi20256dma}. Conventional wireless systems have relied on increasingly large antenna arrays to enhance performance, such as multiple-input multiple-output (MIMO) and massive MIMO architectures. However, continuously increasing the number of antennas inevitably leads to increased hardware cost, energy consumption, and signal-processing complexity. Recently, movable antenna (MA) technology has emerged as a promising approach to efficiently harness additional spatial degrees of freedom (DoF) in wireless channels by dynamically reconfiguring antenna positions to adapt to the propagation environment and enhance channel conditions~\cite{zhang2026movableenergy}. Building upon this concept, the six-dimensional movable antenna (6DMA) architecture has been proposed as an advanced MA paradigm that enables joint adjustment of the three-dimensional (3D) positions and orientations of a limited number of BS antennas, thereby providing greater flexibility in exploiting spatial DoF and improving system performance~\cite{shao_6d_2025-1}.

To fully exploit the potential of 6DMA systems, recent studies have begun to reveal the performance gains of 6DMA-enabled wireless networks in terms of sum-rate maximization under various system settings, including multiuser channel modeling for 6DMA systems~\cite{shao_6dma_2025}, joint optimization of antenna positions and rotation angles under discrete reconfiguration constraints~\cite{shao_6d_2025-1}, and the design of 6DMA-assisted cell-free MIMO networks~\cite{shi20256dma}. Nevertheless, existing works have largely overlooked a critical practical issue, namely,  the mechanical energy consumption incurred by both antenna movement and rotation. Although several works on conventional MA systems have incorporated movement-related power consumption into EE optimization~\cite{zhang2026movableenergy,ding_energy_2025,wei_energy-efficient_2025}, they typically consider only translational motion and neglect the energy cost of antenna rotation. Consequently, EE optimization for 6DMA-enabled networks under a practical mechanical energy consumption model that captures both translational and rotational motions remains largely unexplored. More importantly, ignoring these mechanical energy costs may lead to overly optimistic EE evaluations and impractical resource allocation designs.

Motivated by these research gaps and considerations, this paper studies the EE maximization problem in a multiuser 6DMA system through the joint design of transmit beamforming, time allocation, antenna positions, and antenna orientations, while explicitly accounting for the mechanical energy consumption induced by antenna motion. The resulting problem is highly non-convex due to the strong coupling among the optimization variables, the fractional-form objective function, and the orientation/position-dependent phase variations. To tackle this challenge, we develop a block coordinate descent (BCD)-based algorithm that integrates Dinkelbach’s transformation, majorization-minimization (MM) technique, semidefinite relaxation (SDR), and a penalty-based framework to obtain a high-quality suboptimal solution. Numerical results demonstrate that the proposed 6DMA design significantly outperforms conventional MA systems and fixed-position antenna architectures, highlighting the importance of accounting for energy consumption in the 6DMA system to meticulously balance the trade-off between data rate and mechanical overhead.

\textit{Notation:} Boldface uppercase and lowercase letters denote matrices and vectors, respectively. $\mathbb{R}^N$ and $\mathbb{C}^N$ denote the $N$-dimensional real and complex Euclidean spaces, respectively. $\mathbf{R}(\cdot)$ denotes a $3$D rotation matrix. For any matrix $\mathbf{A}$, $\mathbf{A}^T$, $\mathbf{A}^H$, $\operatorname{Tr}(\mathbf{A})$, $\operatorname{Rank}(\mathbf{A})$, and $[\mathbf{A}]_{i,j}$ denote its transpose, Hermitian transpose, trace, rank, and $(i,j)$-th entry, respectively. $\|\cdot\|_F$, $\|\cdot\|_2$, and $\|\cdot\|_*$ denote the Frobenius, spectral, and nuclear norms, respectively. $\otimes$ denotes the Kronecker product. $\lambda_{\max}(\mathbf{A})$ and $\mathbf{u}_{\max}(\mathbf{A})$ denote the largest eigenvalue of $\mathbf{A}$ and its associated eigenvector, respectively. 
$\mathfrak{Re}\{\cdot\}$ and $\mathfrak{Im}\{\cdot\}$ denote the real and imaginary parts, respectively. $\mathbb{E}\{\cdot\}$ denotes expectation. $\mathcal{CN}(\mu,\sigma^2)$ denotes a circularly symmetric complex Gaussian distribution with mean $\mu$ and variance $\sigma^2$. $\frac{\partial f}{\partial x}$ denotes the partial derivative of $f$ with respect to $x$, and $\arccos(\cdot)$ denotes the inverse cosine function.


\section{System Setup}
\subsection{System Model}
As shown in Fig. \ref{fig:system_model}, we consider a downlink multiuser multiple-input single-output (MISO) system where a base station (BS) serves $K$ single-antenna users with $\mathcal{K} \triangleq \{1,\dots,K\}$. The BS comprises $B$ independently adjustable antenna surfaces, $\mathcal{B} \triangleq \{1, \ldots, B\}$, each with $N_s$ elements and six controllable mechanical DoFs~\cite{shao_6d_2025-1}, yielding a total of $N = B N_s$ antennas. During downlink transmission, the BS sends independent data symbols $s_k \in \mathbb{C}$ to the $k$-th user, with $\mathbb{E}\{|s_k|^2\} = 1, \forall k$.
As illustrated in Fig. \ref{fig:feasible_region}(a), the $b$-th surface is specified by its center $\mathbf{q}_b = [x_b, y_b, z_b]^T \in \mathcal{V}$, while Fig. \ref{fig:feasible_region}(b) depicts its orientation  $\boldsymbol{\theta}_b = [\alpha_b, \beta_b, \gamma_b]^T \in \mathcal{T}$, where $\mathcal{V} = [\mathbf{q}^{\min}, \mathbf{q}^{\max}]$, $\mathcal{T} = [\boldsymbol{\theta}^{\min}, \boldsymbol{\theta}^{\max}]$, and $|\beta_b| \le \beta_{\max} < \pi/2$~\cite{shao_6d_2025-1}.
\begin{figure}[!t]
    \centering
  
    \includegraphics[width=0.98\columnwidth]{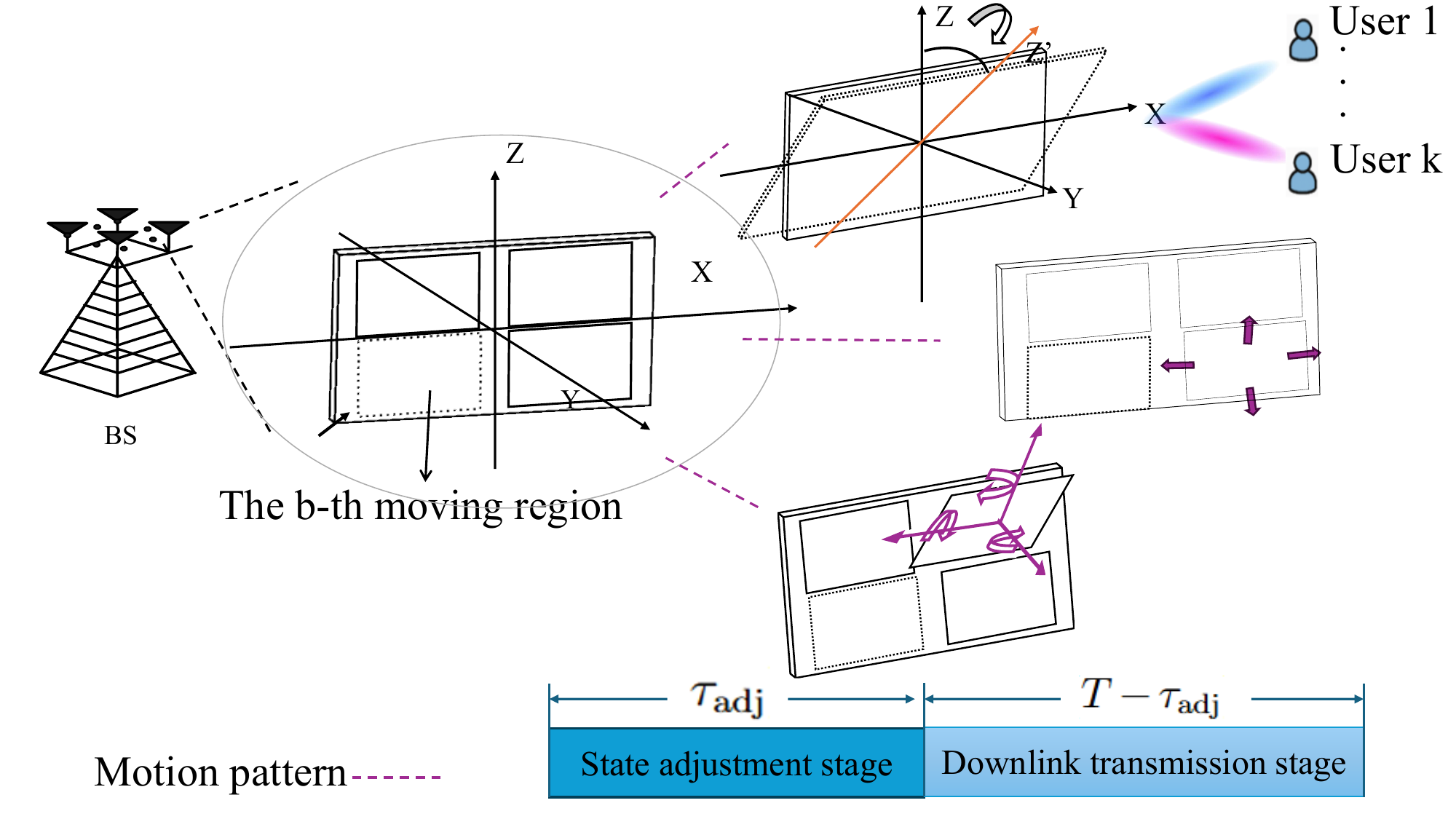} 

    \captionsetup{singlelinecheck=off, justification=raggedright}
    
    \caption{System model of the 6DMA network. }
    \label{fig:system_model}
    \vspace{-4mm}
\end{figure}

Let $\bar{\mathbf{r}}_n \in \mathbb{R}^3$ be the local coordinate of the $n$-th antenna element relative to $\mathbf{q}_b$, its corresponding global position is
\begin{equation}
    \mathbf{r}_{b,n}(\mathbf{q}_b, \boldsymbol{\theta}_b) = \mathbf{q}_b + \mathbf{R}(\boldsymbol{\theta}_b) \bar{\mathbf{r}}_n\in \mathbb{R}^3, \ \forall n \in \mathcal{N}_s, \forall b \in \mathcal{B},
    \label{eq:position_model}
\end{equation}
where $\mathbf{R}(\boldsymbol{\theta}_b) = \mathbf{R}_z(\gamma_b) \mathbf{R}_y(\beta_b) \mathbf{R}_x(\alpha_b) \in \mathbb{R}^{3 \times 3}$ is the rotation matrix for orientation $\boldsymbol{\theta}_b \in \mathbb{R}^3$, with the rotation matrices:
\begin{align}
    \mathbf{R}_x(\alpha_b) &= \scalebox{0.85}{$\begin{bmatrix} 1 & 0 & 0 \\ 0 & \cos\alpha_b & -\sin\alpha_b \\ 0 & \sin\alpha_b & \cos\alpha_b \end{bmatrix}$}, \nonumber \\
    \mathbf{R}_y(\beta_b) &= \scalebox{0.85}{$\begin{bmatrix} \cos\beta_b & 0 & \sin\beta_b \\ 0 & 1 & 0 \\ -\sin\beta_b & 0 & \cos\beta_b \end{bmatrix}$}, \nonumber \\
    \mathbf{R}_z(\gamma_b) &= \scalebox{0.85}{$\begin{bmatrix} \cos\gamma_b & -\sin\gamma_b & 0 \\ \sin\gamma_b & \cos\gamma_b & 0 \\ 0 & 0 & 1 \end{bmatrix}$}.
    \label{3}
\end{align}
Since the local geometry $\{\bar{\mathbf{r}}_n\}$ is fixed, each 6DMA surface is characterized by its position $\mathbf{q}_b$ and orientation $\boldsymbol{\theta}_b$. To avoid physical collisions and excessive mutual coupling among the surfaces, we enforce $\|\mathbf{q}_i - \mathbf{q}_j\|_2 \ge D_{\min},\ \forall i,j \in \mathcal{B},\ i \neq j$~\cite{shao_6d_2025-1}.

\subsection{Energy Consumption Model}
\label{sec:mech_move_power}
Assuming quasi-static block fading~\cite{wei_energy-efficient_2025}, each block of duration $T$ includes a mechanical adjustment phase of duration $\tau_{\mathrm{adj}}$ to configure the 6DMA poses ($\mathbf{q}_b, \boldsymbol{\theta}_b$) and a downlink transmission phase of duration $T - \tau_{\mathrm{adj}}$.
Moreover, we adopt \textit{Euler's rotation theorem}~\cite{spong_robot_2006,murray_mathematical_1994} for describing the rotational energy which avoids the Euclidean metric inconsistencies\footnote{
Specifically, due to the gimbal lock effect~\cite{murray_mathematical_1994}, the Euclidean distance $\|\boldsymbol{\theta}_b - \boldsymbol{\theta}_b^{(0)}\|_2$ may not faithfully reflect the actual mechanical effort of rotation, since a small pitch variation near $90^\circ$ can induce large roll and yaw changes.}~\cite{ding_energy_2025, wei_energy-efficient_2025}. Specifically, the rotation angle $\Delta \phi_b \in [0, \pi]$ is determined by $\mathbf{R}_{\Delta,b} = \mathbf{R}(\boldsymbol{\theta}_b) \mathbf{R}^T(\boldsymbol{\theta}_b^{(0)})$ with 
$\Delta \phi_b = \arccos\big( \tfrac{\text{Tr}(\mathbf{R}_{\Delta,b}) - 1}{2} \big)$, where the corresponding rotation axis\footnote{Two edge cases deserve attention: 1) when $\Delta \phi_b = 0$, the required rotation energy is zero and the rotation axis $\mathbf{n}_b$ can be chosen arbitrarily; 2) when $\Delta \phi_b = \pi$, the expression becomes singular, and $\mathbf{n}_b$ is given by the unit eigenvector of $\mathbf{R}_{\Delta,b}$ associated with eigenvalue $1$.} is~\cite{shi20256dma,shao_6d_2025-1}:
\begin{equation}
     \mathbf{n}_b(\Delta\phi_b)= \frac{1}{2\sin(\Delta \phi_b)} \begin{bmatrix} [\mathbf{R}_{\Delta,b}]_{3,2} - [\mathbf{R}_{\Delta,b}]_{2,3} \\ [\mathbf{R}_{\Delta,b}]_{1,3} - [\mathbf{R}_{\Delta,b}]_{3,1} \\ [\mathbf{R}_{\Delta,b}]_{2,1} - [\mathbf{R}_{\Delta,b}]_{1,2} \end{bmatrix} \in \mathbb{R}^3.
\end{equation}
Assuming translational and angular speeds $v_b$ and $\omega_b$, the translation and rotation durations of surface $b$ are given by $\tau_b^{\text{mov}} = \frac{\|\mathbf{q}_b - \mathbf{q}_b^{(0)}\|_2}{v_b}$ and $\tau_b^{\text{rot}} = \frac{\Delta\phi_b}{\omega_b}$. Accordingly, the adjustment delay is $\tau_\mathrm{adj} = \max_{b \in \mathcal{B}} \{ \max ( \tau_b^{\text{mov}}, \tau_b^{\text{rot}} ) \} \le T$.

\begin{figure}[t]
    \centering
    \captionsetup{singlelinecheck=off, justification=raggedright}

    \begin{subfigure}[t]{0.5\linewidth}
        \centering
        \includegraphics[width=\linewidth]{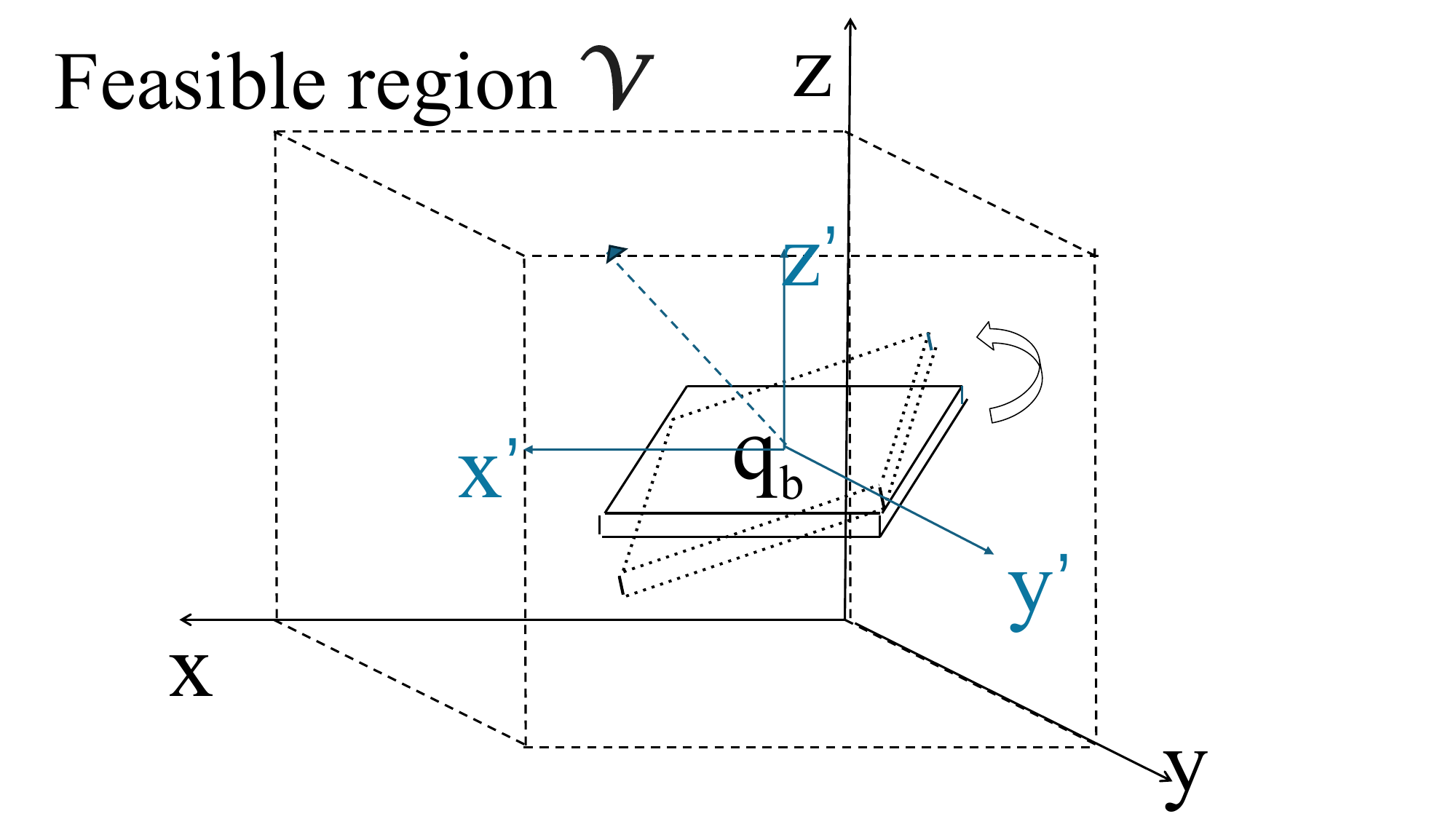}
        \caption{Feasible region $\mathcal{V}$ and center $\mathbf{q}_b$.}
        \label{fig:region}
    \end{subfigure}\hfill
    \begin{subfigure}[t]{0.5\linewidth}
        \centering
        \includegraphics[width=\linewidth]{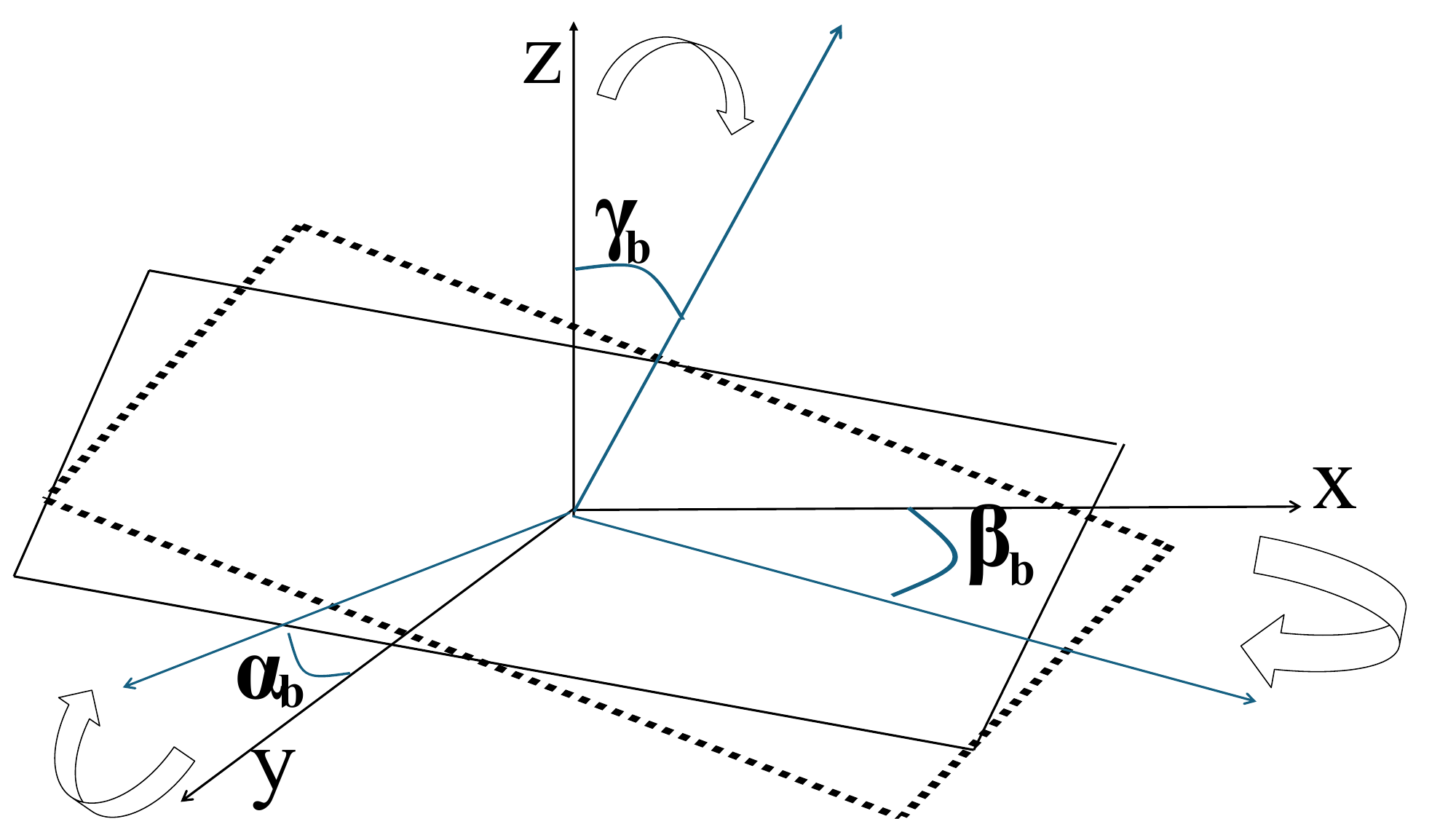}
        \caption{Local frame of the $b$-th surface.}
        \label{fig:rotation}
    \end{subfigure}

    \caption{Geometry of the $b$-th 6DMA surface.}
    \label{fig:feasible_region}

\end{figure}

Define the mechanical energy as $E_{\mathrm{mech}} \triangleq \sum_{b=1}^{B}(E_{\mathrm{move},b}+E_{\mathrm{rotate},b})$. Then, the energy consumption over block $T$ is:
\begin{equation}
    E_{\text{total}} = E_{\text{mech}} + (T - \tau_{\mathrm{adj}}) P_{\text{tx}} + T P_s,
    \label{eq:E_total}
\end{equation}
where $P_{\text{tx}}$ and $P_s$ denote the transmit and static circuit powers, respectively~\cite{zhang2026movableenergy}. 
By neglecting acceleration effects~\cite{wei_energy-efficient_2025}, the translational power consumption is given by $P_{\text{move}}(v)=M_{\text{step}}\frac{v}{l_0}$, where $l_0$ is the actuator step length and $M_{\text{step}}$ is the translational actuator torque~\cite{ding_energy_2025}. The movement energy consumed by surface $b$ is $E_{\text{move},b} = \tau_b^{\text{mov}} P_{\text{move}}(v_b)$ while the rotation energy consumed by surface $b$ is $E_{\text{rotate},b} = \tau_b^{\text{rot}} M_{\text{rot}}\omega_b$~\cite{murray_mathematical_1994}, where $M_{\text{rot}}$ denotes the rotational actuator torque. 
Moreover, for a given transmission duration and beamforming matrix $\mathbf{W} = [\mathbf{w}_1, \dots, \mathbf{w}_K] \in \mathbb{C}^{N \times K}$, where $\mathbf{w}_k \in \mathbb{C}^{N \times 1}$ denotes the beamforming vector for the $k$-th user, the transmit power is $P_{\text{tx}}(\mathbf{W}) = \text{Tr}(\mathbf{W}\mathbf{W}^H)$. 

\subsection{Performance Metric}
This work adopts a plane-wave propagation model under line-of-sight (LoS)-dominated far-field conditions\cite{ shao_6d_2025-1, shao_6dma_2025}. The wireless downlink channel from the 6DMA-enabled BS to the $k$-th user is given by
$\mathbf{h}_k(\mathbf{q},\boldsymbol{\theta}) = [
    \mathbf{h}_{k,1}^T(\mathbf{q}_1, \boldsymbol{\theta}_1), \ldots, 
    \mathbf{h}_{k,B}^T(\mathbf{q}_B, \boldsymbol{\theta}_B)]^T \in \mathbb{C}^{N \times 1}, \forall k$, where $\mathbf{q} = [\mathbf{q}_1^T, \dots, \mathbf{q}_B^T]^T \in \mathbb{R}^{3B \times 1}$ and $\boldsymbol{\theta} = [\boldsymbol{\theta}_1^T, \ldots, \boldsymbol{\theta}_B^T]^T \in \mathbb{R}^{3B \times 1}$. In particular, the subchannel corresponding to the $b$-th surface is given by $\mathbf{h}_{k,b}(\mathbf{q}_b, \boldsymbol{\theta}_b) = g_k \mathbf{a}_{k,b}(\mathbf{q}_b, \boldsymbol{\theta}_b, \varphi_k, \vartheta_k) \in \mathbb{C}^{N_s \times 1}, \forall k, b$~\cite{shao_6d_2025-1}, where $g_k \in \mathbb{C}$ and $\mathbf{a}_{k,b} \in \mathbb{C}^{N_s \times 1}$ denote the complex gain and the array response, respectively~\cite{shao_6d_2025-1}. The $n$-th element of $\mathbf{a}_{k,b}$ is
$[\mathbf{a}_{k,b}]_n = \sqrt{G_0(\varphi_k, \vartheta_k, \boldsymbol{\theta}_b)} e^{-j \Psi_{b,n,k}}$,
where the direction vector is $\mathbf{u}(\varphi_k, \vartheta_k) = [\cos\vartheta_k\cos\varphi_k, \cos\vartheta_k\sin\varphi_k, \sin\vartheta_k]^T$ and $\Psi_{b,n,k} = \frac{2\pi}{\lambda} \mathbf{u}^T(\varphi_k, \vartheta_k) \mathbf{r}_{b,n}$ with wavelength $\lambda$ and orientation-aware element gain $G_0(\cdot)$~\cite{shao_6d_2025-1, 7913628}:
\begin{equation}
G_0(\varphi_k, \vartheta_k, \bm{\theta}_b) = 
\begin{cases}
G \left( \mathbf{u}_k^T\mathbf{z}_b \right)^\eta, & \text{if } \mathbf{u}_k^T\mathbf{z}_b > 0, \\
0, & \text{otherwise}.
\end{cases}
\label{eq:antenna_gain}
\end{equation}
where $\eta \ge 0$ is the beamwidth factor and $G = 2(\eta + 1)$  enforces energy conservation. The rotated normal vector is $\mathbf{z}_b(\bm{\theta}_b) = \mathbf{R}(\bm{\theta}_b)[0, 0, 1]^T$, such that $\mathbf{u}_k^T \mathbf{z}_b$ represents the cosine of the incident angle relative to the boresight.

Hence, the received signal at the $k$-th user is $y_{k} = \mathbf{h}_{k}^{H}\mathbf{w}_{k}s_{k} + \sum_{i\neq k}\mathbf{h}_{k}^{H}\mathbf{w}_{i}s_{i} + n_{k}$, where $n_k \sim \mathcal{CN}(0, \sigma_k^2)$ denotes the additive white Gaussian noise (AWGN), and $\sigma_k^2$ is the corresponding noise power. The resulting signal-to-interference-plus-noise ratio (SINR) is:
\begin{equation}
    \gamma_k(\mathbf{W}, \mathbf{q}, \boldsymbol{\theta}) = \frac{|\mathbf{h}_k^H(\mathbf{q}, \boldsymbol{\theta})\mathbf{w}_k|^2}{\sum_{i \neq k} |\mathbf{h}_k^H(\mathbf{q}, \boldsymbol{\theta})\mathbf{w}_i|^2 + \sigma_k^2}, \quad \forall k,
    \label{eq:sinr}
\end{equation}
and the corresponding achievable data rate and EE are $R_k = \log_2(1 + \gamma_k)$ and  $\text{EE} = \tfrac{(T-\tau_{\mathrm{adj}})\sum_{k=1}^{K} R_k}{E_{\text{total}}}$, respectively.
    
\section{Problem Formulation}
We jointly optimize $\mathbf{W}$, $\mathbf{q}$, $\boldsymbol{\theta}$, and $\tau_{\text{adj}}$ to maximize the system EE, which is formulated as:
\begin{flalign}
\label{prob:P1}
\begin{aligned}
    & \max_{\mathbf{W}, \mathbf{q}, \boldsymbol{\theta}, \tau_{\mathrm{adj}}} \quad \text{EE} \\
    & \text{s.t.} \; \text{C1}: \text{Tr}(\mathbf{W}\mathbf{W}^H) \le P_{\max}, \\
    & \quad\;\;  \text{C2}: \tfrac{T - \tau_{\mathrm{adj}}}{T} R_k \ge R_{\min}, \, \forall k, \\
    & \quad\;\; \text{C3}: \tfrac{\|\mathbf{q}_b - \mathbf{q}_b^{(0)}\|_2}{v_b} \le \tau_{\mathrm{adj}}, \, \forall b, \; \text{C4}: \tfrac{\Delta\phi_b}{\omega_b} \le \tau_{\mathrm{adj}}, \, \forall b, \\
    & \quad\;\; \text{C5}: \|\mathbf{q}_i - \mathbf{q}_j\|_2^2 \ge D_{\min}^2, \quad \forall i, j \in \mathcal{B}, i \neq j, \\
    & \quad\;\; \text{C6}: \mathbf{q}_b \in \mathcal{V}, \boldsymbol{\theta}_b \in \mathcal{T}, \, \forall b, \quad \text{C7}: 0 \le \tau_{\mathrm{adj}} \le T.
\end{aligned} &&
\end{flalign}
where constraints C1--C2 ensure the transmit power budget, $P_{\max}$ and the quality of service (QoS) requirements, $R_{\min}$, respectively. Constraints C3--C5 impose the mechanical, delay, and collision-avoidance constraints, respectively, while C6 and C7 specify the feasible position and orientation ranges and block-duration limit, respectively.

\section{Optimization Solution}
Since problem \eqref{prob:P1} is highly non-convex due to the coupling among the optimization variables and the fractional-form objective, we apply semidefinite lifting by defining$\mathbf{W}_k=\mathbf{w}_k\mathbf{w}_k^H$ and $\mathbf{H}_k=\mathbf{h}_k\mathbf{h}_k^H$, so that $|\textbf{h}_k^H\textbf{w}_j|^2 = \text{Tr}(\textbf{H}_k\textbf{W}_j), \forall k,j \in \mathcal{K}$. We then employ a BCD-based framework to alternately optimize four variable blocks: beamforming matrices $\mathbf{W}$, antenna positions $\mathbf{q}$, antenna orientations $\boldsymbol{\theta}$, and time allocation $\tau_{\text{adj}}$, thereby obtaining a high-quality suboptimal solution.

\subsection{Block 1: Optimization of Beamforming Matrices}
We optimize the beamforming matrices $\mathbf{W}$ for given $\mathbf{q}$, $\boldsymbol{\theta}$, and $\tau_{\text{adj}}$. The resulting problem can be written as:
\begin{equation} \label{eq: formul_1}
\begin{aligned}
    & \underset{\{\mathbf{W}\}}{\text{maximize}} \quad \text{EE} \\
    \text{s.t.} \; 
    & \text{C1, C2, C8}\hspace{-1mm}: \mathbf{W}_k \succeq \mathbf{0}, \forall k, \; \text{C9}: \text{Rank}(\mathbf{W}_k) \le 1, \forall k.
\end{aligned}
\end{equation}
Since \eqref{eq: formul_1} remains non-convex, we recast the rate expression to a difference-of-convex (DC) form, i.e., $R_k = f_{1,k} - f_{2,k}, \forall k$, where $f_{1,k} \triangleq \log_2 ( \sum_{i \in \mathcal{K}} \text{Tr}(\mathbf{H}_k \mathbf{W}_i) + \sigma_k^2 )$ and $f_{2,k} \triangleq \log_2 ( \sum_{j \neq k} \text{Tr}(\mathbf{H}_k \mathbf{W}_j) + \sigma_k^2 )$. Since $f_{2,k}$ is concave, we apply the MM method and utilize the first-order Taylor expansion to obtain the affine lower bound $R_k(\mathbf{W}) \ge \tilde{R}_k^{(t_1)}(\mathbf{W})$, as
\begin{equation}
\label{eq:R_k_surrogate}
\begin{aligned}
    \tilde{R}_k^{(t_1)} &= f_{1,k} - f_{2,k}^{(t_1)} \\
    &\quad - \sum_{i \neq k} \frac{\text{Tr}(\mathbf{H}_k(\mathbf{W}_i - \mathbf{W}_i^{(t_1)}))}{(\ln 2) \left( \sum_{j \neq k} \text{Tr}(\mathbf{H}_k \mathbf{W}_j^{(t_1)}) + \sigma_k^2 \right)}, \forall k,
\end{aligned}
\end{equation}
where ($t_1$) denotes the solution of the $t_1$-th MM iteration.
Hence, applying Dinkelbach's method in Appendix A, a suboptimal solution to~\eqref{eq: formul_1} can be obtained by solving the $(t_1+1)$-th MM iteration with a given parameter $\lambda^{(m)}$, as:
\begin{equation} \label{eq:formul_1.1}
\begin{aligned}
    \max_{\{\mathbf{W}_k\}} \quad & (T - \tau_{\text{adj}}) \sum_{k} \big( \tilde{R}_{k}^{(t_1)} - \lambda^{(m)} \text{Tr}(\mathbf{W}_k) \big) \\
    &\quad - \lambda^{(m)} (E_{\text{mech}} + T P_s) \\
    \text{s.t.} \quad & \text{C1}, \ \overline{\text{C2}}: \frac{T - \tau_{\text{adj}}}{T}\tilde{R}_{k}^{(t_1)} \ge R_{\min}, \ \text{C8}.
\end{aligned}
\end{equation}
where $\overline{\text{C2}} \Rightarrow \text{C2}$ and the rank constraint C9 is relaxed such that problem~\eqref{eq:formul_1.1} is convex and can be solved by a standard convex programming solver~\cite{hu_sum-rate_2025}. The tightness of the SDR is revealed in the following theorem. 

\begin{theorem} \label{thm:Rank_one}
If $P_{\max},R_{{\rm min}} > 0$, and~\eqref{eq:formul_1.1} is feasible, then a solution satisfying $\text{Rank}(\textbf{W}_k) \le 1$, can always be constructed.
\end{theorem}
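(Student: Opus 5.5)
We follow the standard Lagrangian/KKT argument for SDR tightness in multiuser MISO beamforming. Problem~\eqref{eq:formul_1.1} is convex in $\{\mathbf{W}_k\}$. Its objective and constraints are affine in each $\mathbf{W}_k$, apart from the concave term $f_{1,k}$, and the feasible set is nonempty by assumption. Since $P_{\max}>0$, scaling a feasible point should give a strictly feasible one, so we invoke Slater's condition and conclude that strong duality holds and the KKT conditions are necessary and sufficient. Let $\mu\ge0$ be the multiplier of C1, $\nu_k\ge0$ those of $\overline{\text{C2}}$, and $\mathbf{Y}_k\succeq\mathbf{0}$ those of C8.

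Differentiating the Lagrangian with respect to $\mathbf{W}_k$, stationarity takes the form
\begin{equation*}
\mathbf{Y}_k=\big((T-\tau_{\text{adj}})\lambda^{(m)}+\mu\big)\mathbf{I}_N-\boldsymbol{\Delta}_k ,
\end{equation*}
with complementary slackness $\mathbf{Y}_k\mathbf{W}_k=\mathbf{0}$. Here $\boldsymbol{\Delta}_k$ collects the gradient terms of the rate surrogates. It has two parts. The first is a positive multiple $a_k\mathbf{H}_k$, coming from the derivative of $f_{1,k}$ together with the corresponding part of $\nu_k\tilde R_k^{(t_1)}$. The second is a sum $\sum_{j}c_{j}\mathbf{H}_j$ of terms arising from the other users' $f_{1,j}$ and from the linearized $f_{2,j}$. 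The coefficients $c_j$ have mixed signs, since the linearized interference terms enter negatively.

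Write $\boldsymbol{\Delta}_k=a_k\mathbf{H}_k+\mathbf{B}_k$ with $a_k>0$. We rely on $a_k>0$: the objective is strictly increasing in $\text{Tr}(\mathbf{H}_k\mathbf{W}_k)$ through $f_{1,k}$, and $(T-\tau_{\text{adj}})>0$ is forced by $\overline{\text{C2}}$ because $R_{\min}>0$. Set $c=(T-\tau_{\text{adj}})\lambda^{(m)}+\mu$. We split into two cases.

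\emph{Case (i): $c\mathbf{I}_N-\mathbf{B}_k\succ\mathbf{0}$.} Then
\begin{equation*}
\text{Rank}(\mathbf{Y}_k)\ge \text{Rank}(c\mathbf{I}_N-\mathbf{B}_k)-\text{Rank}(a_k\mathbf{H}_k)\ge N-1 .
\end{equation*}
Complementary slackness places the range of $\mathbf{W}_k$ in the null space of $\mathbf{Y}_k$, so $\text{Rank}(\mathbf{W}_k)\le1$.

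\emph{Case (ii): $c\mathbf{I}_N-\mathbf{B}_k$ is singular or indefinite.} This is the main obstacle. We argue by contradiction: take $\mathbf{v}$ with $\mathbf{v}^H(c\mathbf{I}_N-\mathbf{B}_k)\mathbf{v}\le0$ and consider the dual function. Since $\mathbf{Y}_k\succeq\mathbf{0}$, we have $\mathbf{v}^H\mathbf{Y}_k\mathbf{v}\ge0$, which forces $a_k|\mathbf{h}_k^H\mathbf{v}|^2\ge -\mathbf{v}^H(c\mathbf{I}_N-\mathbf{B}_k)\mathbf{v}\ge0$. We then aim to show that if $\mathbf{Y}_k$ had a null space of dimension at least $2$, we could pick a direction in it orthogonal to $\mathbf{h}_k$. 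Adding that direction to $\mathbf{W}_k$ leaves user $k$'s signal term unchanged and changes the interference and power terms only affinely. This would either make the dual unbounded or strictly improve the objective at zero cost, contradicting optimality.

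If this contradiction argument cannot be completed, we fall back on a constructive step, which is what the theorem's wording ``can be constructed'' suggests. Given any optimal $\{\mathbf{W}_k^\star\}$ with $\text{Rank}(\mathbf{W}_k^\star)>1$, define
\begin{equation*}
\hat{\mathbf{w}}_k=\frac{\mathbf{W}_k^\star\mathbf{h}_k}{\sqrt{\mathbf{h}_k^H\mathbf{W}_k^\star\mathbf{h}_k}},\qquad \hat{\mathbf{W}}_k=\hat{\mathbf{w}}_k\hat{\mathbf{w}}_k^H ,
\end{equation*}
so that $\text{Tr}(\mathbf{H}_k\hat{\mathbf{W}}_k)=\text{Tr}(\mathbf{H}_k\mathbf{W}_k^\star)$ and $\mathbf{W}_k^\star-\hat{\mathbf{W}}_k\succeq\mathbf{0}$. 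The second property follows from the Cauchy--Schwarz inequality in the $\mathbf{W}_k^\star$-inner product. Hence every interference term $\text{Tr}(\mathbf{H}_j\hat{\mathbf{W}}_k)$, $j\neq k$, does not increase, and neither does $\text{Tr}(\hat{\mathbf{W}}_k)$, while the desired-signal term is unchanged. It remains to check that C1, $\overline{\text{C2}}$, and the objective are preserved.

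The delicate point in this check is the sign structure of $\tilde R_j^{(t_1)}$. In $f_{1,j}$, reducing $\text{Tr}(\mathbf{H}_j\mathbf{W}_k)$ lowers the total-received-power term. However, the linearized $-f_{2,j}$ term decreases by at least as much, because its coefficient $1/\big((\ln2)(\sum_{i\neq j}\text{Tr}(\mathbf{H}_j\mathbf{W}_i^{(t_1)})+\sigma_j^2)\big)$ dominates the derivative of $f_{1,j}$ by concavity. Verifying this monotonicity is the step we expect to require the most care.
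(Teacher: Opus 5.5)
Your case (i) is fine, but the degenerate case is the hard part, and neither of your two routes through it closes. Strict feasibility is also only asserted: scaling a feasible point moves each $\tilde R_k^{(t_1)}$ in an uncontrolled direction, so Slater is not established.

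In case (ii) your inequality is reversed. $\mathbf{Y}_k\succeq\mathbf{0}$ gives $\mathbf{v}^H(c\mathbf{I}_N-\mathbf{B}_k)\mathbf{v}\ge a_k|\mathbf{h}_k^H\mathbf{v}|^2\ge 0$. So $c\mathbf{I}_N-\mathbf{B}_k$ is always positive semidefinite, and the only issue is a nontrivial null space. Its vectors are automatically orthogonal to $\mathbf{h}_k$ and lie in $\operatorname{null}(\mathbf{Y}_k)$. A null space of $\mathbf{Y}_k$ of dimension at least $2$ is not contradictory by itself. Adding $t\mathbf{v}\mathbf{v}^H$ to $\mathbf{W}_k$ is also not free: it costs $t\|\mathbf{v}\|_2^2$ in C1 and in the $\lambda^{(m)}$ term, and the interference it creates has an ambiguous effect on the other surrogates. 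So the contradiction you sketch does not follow.

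The fallback $\hat{\mathbf{W}}_k$ is the right object, and its two properties are correct. However, the step you flag as delicate is false as argued. Let $S_j$ be user $j$'s total received power plus noise at the current point, and let $D_j=\sum_{i\neq j}\text{Tr}(\mathbf{H}_j\mathbf{W}_i^{(t_1)})+\sigma_j^2$. Then $\partial\tilde R_j^{(t_1)}/\partial\,\text{Tr}(\mathbf{H}_j\mathbf{W}_k)=\frac{1}{\ln 2}\big(\frac{1}{S_j}-\frac{1}{D_j}\big)$, which compares the \emph{current} total power with the \emph{old} interference. Concavity says nothing about this sign, and $S_j<D_j$ is compatible with $\overline{\text{C2}}$. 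Worse, when $c>0$, case (ii) requires $\mathbf{B}_k$ to have the eigenvalue $c>0$. Then some $c_j\propto 1/S_j-1/D_j$ is positive, which is exactly the regime in which cutting interference to user $j$ lowers $\tilde R_j^{(t_1)}$. So your monotonicity claim fails precisely when the fallback is needed.

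The missing idea is a transfer argument at an optimal point. Suppose some $\mathbf{v}\in\operatorname{range}(\mathbf{W}_k^\star)$ has $\mathbf{h}_k^H\mathbf{v}=0$ but $\mathbf{h}_j^H\mathbf{v}\neq0$ for some $j\neq k$. Move $\epsilon\mathbf{v}\mathbf{v}^H$ from $\mathbf{W}_k^\star$ to $\mathbf{W}_j^\star$. Total power and every user's total received power (hence every $f_{1,i}$) are unchanged. The only linearized interference term that changes is user $j$'s, which drops by $\epsilon|\mathbf{h}_j^H\mathbf{v}|^2$. Hence $\tilde R_j^{(t_1)}$ and the objective strictly increase (as $T-\tau_{\text{adj}}>0$), a contradiction.

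Since $\operatorname{range}(\mathbf{W}_k^\star-\hat{\mathbf{W}}_k)\subseteq\operatorname{range}(\mathbf{W}_k^\star)\cap\{\mathbf{h}_k\}^{\perp}$, the part you remove is therefore invisible to every user. So $\hat{\mathbf{W}}_k$ leaves every $\tilde R_j^{(t_1)}$ unchanged while not increasing power. Your normalization also needs $\mathbf{h}_k^H\mathbf{W}_k^\star\mathbf{h}_k>0$, and this is where $R_{\min}>0$ enters. Zero desired signal gives $(\ln 2)\tilde R_k^{(t_1)}=\ln x+1-x\le 0$, with $x$ the ratio of current to old interference-plus-noise, which violates $\overline{\text{C2}}$.

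This route needs no duality. The paper's sketch instead stays with the KKT conditions and builds the rank-one solution from the optimal dual variables.
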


\begin{proof}
Due to space limitations, only a proof sketch is provided. By examining the Karush-Kuhn-Tucker conditions of \eqref{eq:formul_1.1}, there always exists an optimal rank-one solution $\textbf{W}_k^\star$. Moreover, $\textbf{W}_k^\star$ can be explicitly constructed from the optimal dual variables of the corresponding dual problem.
\end{proof}

\subsection{Block 2: Optimization of Antenna Positions}
In this subsection, we optimize the antenna position vectors $\mathbf{q}$ with given $\mathbf{W}$, $\boldsymbol{\theta}$, and $\tau_{\text{adj}}$, which can be formulated as:
\begin{align}
\label{eq: formul_2}
&\underset{\{\mathbf{q}\}}{\text{maximize}} \quad\text{EE} \nonumber \\
\text{s.t.} & \ \text{C2}, \text{C3}, \text{C5}, \text{C6}.
\end{align}
For exposition, we absorb the path-loss and array response terms into $\bar{\mathbf{h}}_{k,b} \in \mathbb{C}^{N_s \times 1}$, and denote the resulting channel model by $\mathbf{h}_{k,b}(\mathbf{q}_b)=\bar{\mathbf{h}}_{k,b}e^{-j\frac{2\pi}{\lambda}\mathbf{u}_k^T\mathbf{q}_b}$.
With the phase-shift vector $\mathbf{a}_k(\mathbf{q})=\bigl[e^{-j\frac{2\pi}{\lambda}\mathbf{u}_k^T\mathbf{q}_1},\dots,e^{-j\frac{2\pi}{\lambda}\mathbf{u}_k^T\mathbf{q}_B}\bigr]^T$, we introduce $\mathbf{X}_k=\mathbf{a}_k(\mathbf{q})\mathbf{a}_k^H(\mathbf{q})$ such that $\operatorname{Tr}(\mathbf{H}_k\mathbf{W}_i)\!=\!\operatorname{Tr}(\mathbf{M}_{k,i}\mathbf{X}_k)$, where $[\mathbf{M}_{k,i}]_{b,b'}=\bar{\mathbf{h}}_{k,b}^H\mathbf{W}_i^{b,b'}\bar{\mathbf{h}}_{k,b'}$, $\forall\, b,b',k$. Consequently, problem~\eqref{eq: formul_2} can be reformulated as:
\begin{equation} \label{eq: formul_2.1}
\begin{aligned}
    & \underset{\mathbf{q},\{\mathbf{X}_k\}}{\text{maximize}} \quad \text{EE} \\
    \text{s.t.} \quad 
    & \text{C2}, \text{C3}, \text{C5}, \text{C6}, \, \text{C10}: \mathbf{X}_k \succeq \mathbf{0}, \, \forall k, \\
    & \text{C11}: \text{Rank}(\mathbf{X}_k) = 1, \, \text{C12}: \text{diag}(\mathbf{X}_k) = \mathbf{1}, \, \forall k, \\
    & \text{C13}: [\mathbf{X}_k]_{b,b'} = e^{-j \Delta \psi_{k,b,b'}},
\end{aligned}
\end{equation}
where $\Delta \psi_{k,b,b'} \triangleq \frac{2\pi}{\lambda} \mathbf{u}_k^T (\mathbf{q}_b - \mathbf{q}_{b'})$ for notational simplicity. Moreover, to effectively handle the rank-one constraint C11, we introduce the following lemma \cite{conference}:
\begin{lemma} \label{lem:Rank_one}
The rank-one constraint C11 is equivalent to constraint $\overline{\mathrm{C11}}:\left\| \mathbf{X}_k \right\|_* - \left\| \mathbf{X}_k \right\|_2 \leq 0, \forall k$.\end{lemma}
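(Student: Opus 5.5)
The plan is to prove the equivalence through the eigenvalues of $\mathbf{X}_k$, working under constraint C10 ($\mathbf{X}_k \succeq \mathbf{0}$), which appears alongside C11 in~\eqref{eq: formul_2.1}.

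\emph{Step 1: express both norms through the spectrum.} Since $\mathbf{X}_k$ is Hermitian and positive semidefinite, let $\lambda_1 \ge \lambda_2 \ge \dots \ge \lambda_B \ge 0$ be its eigenvalues. Its singular values are exactly these eigenvalues. Hence
\begin{equation*}
\|\mathbf{X}_k\|_* = \operatorname{Tr}(\mathbf{X}_k) = \sum_{i=1}^{B} \lambda_i, \qquad \|\mathbf{X}_k\|_2 = \lambda_{\max}(\mathbf{X}_k) = \lambda_1 .
\end{equation*}
Therefore
\begin{equation*}
\|\mathbf{X}_k\|_* - \|\mathbf{X}_k\|_2 = \sum_{i=2}^{B} \lambda_i \ge 0 .
\end{equation*}
The inequality holds for every positive semidefinite matrix, so $\overline{\mathrm{C11}}$ can only hold with equality.

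\emph{Step 2: show the two constraints coincide.} If $\operatorname{Rank}(\mathbf{X}_k) = 1$, then $\lambda_2 = \dots = \lambda_B = 0$, the sum above vanishes, and $\overline{\mathrm{C11}}$ holds. Conversely, if $\overline{\mathrm{C11}}$ holds, Step 1 gives $\sum_{i\ge 2}\lambda_i = 0$. Each of these terms is nonnegative, so $\lambda_i = 0$ for all $i \ge 2$, and thus $\operatorname{Rank}(\mathbf{X}_k) \le 1$.

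\emph{Step 3: rule out the zero matrix.} Rank exactly one needs $\mathbf{X}_k \neq \mathbf{0}$. This follows from C12, since $\operatorname{diag}(\mathbf{X}_k) = \mathbf{1}$ gives $\operatorname{Tr}(\mathbf{X}_k) = B > 0$ and hence $\lambda_1 > 0$. So, together with C10 and C12, C11 and $\overline{\mathrm{C11}}$ describe the same feasible set.

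\emph{Main obstacle.} There is no real difficulty. The only point needing care is that the equivalence relies on C10 and C12. Without positive semidefiniteness, the singular values would not equal the eigenvalues; without C12, $\mathbf{X}_k = \mathbf{0}$ would be allowed. I would state this dependence explicitly in the proof.
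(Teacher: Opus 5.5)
Your proof is correct and follows the same route as the paper's. Both write $\|\mathbf{X}_k\|_* - \|\mathbf{X}_k\|_2$ as the sum of the non-maximal singular values (eigenvalues, under C10), which is nonnegative and vanishes exactly when the rank is at most one. Your Step 3, which uses C12 to exclude $\mathbf{X}_k = \mathbf{0}$, is a worthwhile refinement: the paper's claim that ``equality holds if and only if $\mathbf{X}$ is rank-one'' glosses over the zero matrix.
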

\begin{proof}
For any $\textbf{X} \in \mathbb{H}^n \succeq \textbf{0}$ , the inequality $\left\|\textbf{X}\right\|_* = \sum_i\Lambda_i \geq \left\|\textbf{X}\right\|_2 = \text{max}\{\Lambda_i\}$ always holds, where $\Lambda_i$ is the $i$-th singular value of $\textbf{X}$. The equality holds if and only if $\textbf{X}$ is rank-one.
\end{proof}
Although $\overline{\text{C11}}$ remains non-convex, it admits a DC form. We then decouple C13 by applying Euler's formula~\cite{conference, blockage}:
\begin{equation}
\begin{aligned}\label{eq: Euler}
    \text{C13a}: \quad & \mathfrak{Re}\{[\mathbf{X}_k]_{b,b'}\} = \cos(\Delta \psi_{k,b,b'}), && \forall k, b, b', \\
    \text{C13b}: \quad & \mathfrak{Im}\{[\mathbf{X}_k]_{b,b'}\} = -\sin(\Delta \psi_{k,b,b'}), && \forall k, b, b'.
\end{aligned}
\end{equation}
To further handle the challenges caused by the periodic nature of the trigonometric functions, we adopt a penalty-based approach~\cite{yu_irs-assisted_2021,blockage}, in which a phase-consistency violation term is introduced into the numerator of the objective function to preserve the fractional EE structure, yielding
\begin{equation}\label{eq: formul_2.2}
\begin{aligned}
\underset{\mathbf{q}, \{\mathbf{X}_k\}}{\text{maximize}} \quad & \frac{(T - \tau_{\text{adj}}) \sum_{k=1}^{K} R_{k}(\mathbf{X}_k) - \rho \mathcal{P}( \mathbf{X}, \mathbf{q})}{E_{\text{total}}}   \\
\text{s.t.} \quad & \text{C2}, \text{C3}, \text{C5}, \text{C6},
\text{C10},
\overline{\text{C11}},
\text{C12},
\end{aligned}
\end{equation}
where $\mathcal{P}(\mathbf{X}, \mathbf{q}) = \sum_{k, b, b'} \big( | \mathfrak{Re}\{[\mathbf{X}_k]_{b,b'}\} - \cos(\Delta \psi_{k,b,b'}) |^2 + | \mathfrak{Im}\{[\mathbf{X}_k]_{b,b'}\} + \sin(\Delta \psi_{k,b,b'}) |^2 \big)$.

\begin{proposition}
As the penalty factor $\rho \rightarrow \infty$, the penalized problem in \eqref{eq: formul_2.2} becomes equivalent to problem~\eqref{eq: formul_2.1}~\cite{yu_irs-assisted_2021,conference}.
\end{proposition}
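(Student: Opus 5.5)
We follow the standard exact-penalty argument of~\cite{yu_irs-assisted_2021}, adapted to the fractional objective. Let $\mathcal{F}$ denote the feasible set of~\eqref{eq: formul_2.2}, i.e., the points $(\mathbf{q},\{\mathbf{X}_k\})$ satisfying C2, C3, C5, C6, C10, $\overline{\text{C11}}$, and C12. By Lemma~\ref{lem:Rank_one}, $\overline{\text{C11}}$ is equivalent to C11. We first observe that $\mathcal{P}(\mathbf{X},\mathbf{q})\ge 0$, with equality if and only if C13a--C13b hold, i.e., C13 holds. Hence the feasible set of~\eqref{eq: formul_2.1} is exactly $\mathcal{F}\cap\{\mathcal{P}=0\}$, and on this set the two objectives coincide because the penalty term vanishes. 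We also note that $\mathcal{F}$ is compact: $\mathcal{V}$ is a box, C12 together with C10 bounds every entry of $\mathbf{X}_k$ by $1$, and all remaining constraints are closed. The rate $R_k$ and $E_{\text{total}}$ are continuous on $\mathcal{F}$, and $E_{\text{total}}\ge TP_s>0$.

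Let $(\mathbf{q}_\rho,\mathbf{X}_\rho)$ be an optimal solution of~\eqref{eq: formul_2.2} for penalty $\rho$; it exists by compactness and continuity. Let $\mathrm{EE}^\star$ be the optimal value of~\eqref{eq: formul_2.1}, which we assume feasible. For any feasible point of~\eqref{eq: formul_2.1}, the penalized objective equals the EE, so the penalized optimum is at least $\mathrm{EE}^\star$. Since the numerator rate term is bounded above by some $\bar R<\infty$ on $\mathcal{F}$ and the denominator lies in $[TP_s,\bar E]$, we obtain
\begin{equation}
\rho\,\mathcal{P}(\mathbf{X}_\rho,\mathbf{q}_\rho)\le (T-\tau_{\text{adj}})\bar R-\mathrm{EE}^\star\, TP_s ,
\end{equation}
so $\mathcal{P}(\mathbf{X}_\rho,\mathbf{q}_\rho)=O(1/\rho)\to 0$. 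Now take a sequence $\rho_n\to\infty$. By compactness, a subsequence converges to some $(\bar{\mathbf{q}},\bar{\mathbf{X}})\in\mathcal{F}$. Continuity of $\mathcal{P}$ then gives $\mathcal{P}(\bar{\mathbf{X}},\bar{\mathbf{q}})=0$, so the limit point is feasible for~\eqref{eq: formul_2.1}.

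It remains to show that this limit is optimal. Because the penalty is nonnegative and subtracted from the numerator, the EE of $(\mathbf{q}_{\rho_n},\mathbf{X}_{\rho_n})$ is at least its penalized value, which is at least $\mathrm{EE}^\star$. Passing to the limit by continuity yields $\mathrm{EE}(\bar{\mathbf{q}},\bar{\mathbf{X}})\ge \mathrm{EE}^\star$. Feasibility of the limit gives the reverse inequality, so $(\bar{\mathbf{q}},\bar{\mathbf{X}})$ is optimal for~\eqref{eq: formul_2.1} and the optimal values converge. This is the sense in which the two problems become equivalent as $\rho\to\infty$.

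The main obstacle is justifying continuity on a compact set for the objective that actually appears. The rate here is written as a function of $\mathbf{X}_k$ through $\operatorname{Tr}(\mathbf{M}_{k,i}\mathbf{X}_k)$, and the QoS constraint C2 must define a closed set. We would check this by noting that the SINR denominators are at least $\sigma_k^2>0$, which keeps the logarithms continuous. A second subtlety is that the equivalence only holds in this limiting sense. If exact equivalence at a finite $\rho$ is wanted instead, one would additionally invoke an exact-penalty argument via Lipschitz continuity of the objective; we would mention this but not rely on it.
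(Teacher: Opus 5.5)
Your argument is correct and matches the standard sequential-penalty argument of the references the paper cites; the paper gives no proof beyond that citation. Your key steps are exactly what the fractional objective needs: bounding the denominator by $E_{\text{total}}\ge TP_s>0$ to get $\mathcal{P}(\mathbf{X}_\rho,\mathbf{q}_\rho)=O(1/\rho)$, then using compactness of the relaxed feasible set to pass to a limit point that is feasible and optimal. The one thing to drop is your closing aside on exactness at finite $\rho$: for a squared penalty like $\mathcal{P}$, whose gradient vanishes wherever C13 holds, exact penalization generally fails, but since you do not rely on it, the proof stands.
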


Following the same procedure as in~\eqref{eq:R_k_surrogate}, we apply the MM method together with a first-order Taylor approximation at $\mathbf{X}^{(t_2)}$ to problem~\eqref{eq: formul_2.2}, yielding the concave surrogate $\tilde{R}_k^{(t_2)}$. The resulting convex approximation of C2 is denoted by $\widetilde{\mathrm{C2}}$, with the details omitted for brevity. Likewise, the convex approximation of C5 and $\overline{\mathrm{C11}}$ are given by
\begin{align}
    \overline{\text{C5}}: \,\, & 2(\Delta\mathbf{q}_{ij}^{(t_2)})^T \Delta\mathbf{q}_{ij} - \|\Delta\mathbf{q}_{ij}^{(t_2)}\|^2 \geq D_{\min}^2,  \nonumber \\
    \overline{\overline{\text{C11}}}: \,\, & \|\mathbf{X}_k\|_* - \text{Tr}(\mathbf{V}_k^{(t_2)} \mathbf{X}_k) \leq 0, \quad \forall k, \label{eq:C11_linear}
\end{align}
respectively, where $\overline{\text{C5}} \Rightarrow \text{C5}$, $\overline{\overline{\text{C11}}} \Rightarrow  \overline{\text{C11}}$, $\Delta\mathbf{q}_{ij} = \mathbf{q}_i - \mathbf{q}_j$, $\mathbf{V}_k^{(t_2)} = \mathbf{v}_k^{(t_2)} (\mathbf{v}_k^{(t_2)})^H$, and $\mathbf{v}_k^{(t_2)}$ is the principal eigenvector of $\mathbf{X}_k^{(t_2)}$. 
To tackle the non-convex trigonometric penalty term, we employ an MM-based Lipschitz surrogate technique to establish a valid global upper bound~\cite{conference,mairal_incremental_2015}, as
\begin{align}\label{eq:penalty_surrogate}
&\tilde{\mathcal{P}}^{(t_2)}(\mathbf{q,X}) = \mathcal{P}(\mathbf{q}^{(t_2)}, \mathbf{X}) + \nabla_{\mathbf{q}} \mathcal{P}(\mathbf{q}^{(t_2)}, \mathbf{X})^T (\mathbf{q} - \mathbf{q}^{(t_2)}) \nonumber \\
& \hspace{45mm} + \frac{L_P}{2} \|\mathbf{q} - \mathbf{q}^{(t_2)}\|_2^2,
\end{align}
where the Lipschitz constant is given by $L_P = 8\pi^2 KB/\lambda^2$, as derived in Appendix B. By adopting the Dinkelbach's method in Appendix A, the $(t_2 + 1)$-th iteration of the penalty-based MM method for this subproblem can be expressed as: 
\begin{equation}
\label{eq:forum3}
\begin{aligned}
    \max_{\mathbf{q}, \{\mathbf{X}_k\}} \quad & (T - \tau_{\mathrm{adj}}) \sum_k \tilde{R}_k^{(t_2)}(\mathbf{X}_k) \\
    &\quad - \rho \tilde{\mathcal{P}}^{(t_2)}(\mathbf{q}, \mathbf{X}) - \lambda^{(m)} E_{\mathrm{total}} \\
    \text{s.t.} \quad & \text{C1}, \widetilde{\text{C2}}, \text{C3}, \overline{\text{C5}}, \text{C6}, \text{C10}, \overline{\overline{\text{C11}}}, \text{C12}.
\end{aligned}
\end{equation}
where \eqref{eq:forum3} is convex and yields a suboptimal solution to~\eqref{eq: formul_2}.

\subsection{Block 3: Optimization of Antenna Orientations}
Now, we address the optimization of the orientation vector $\boldsymbol{\theta}$ with fixed $\mathbf{W}, \mathbf{q}$, and $\tau_{\text{adj}}$, which is formulated as:
\begin{equation}
\begin{aligned}\label{eq: formul_3}
\quad & \underset{\boldsymbol{\theta}}{\text{maximize}} \quad\text{EE}
\quad \text{s.t.} \quad \text{C2}, \text{C4}, \text{C6}, \text{C7}.
\end{aligned}
\end{equation}
Following Section IV-B, we decompose the channel vector into $\mathbf{h}_k = g_k \mathbf{a}_k(\boldsymbol{\theta})$. By defining the slack matrix $\mathbf{A}_k \triangleq \mathbf{a}_k(\boldsymbol{\theta})\mathbf{a}_k^H(\boldsymbol{\theta})$, the term $|\mathbf{h}_k^H \mathbf{w}_i|^2 = |g_k|^2 \mathrm{Tr}(\mathbf{A}_k \mathbf{W}_i)$ captures the coherent combining gains, allowing $R_k$ to be expressed as a function of $\mathbf{A}_k$.
Hence, the optimization problem in \eqref{eq: formul_3} can be equivalently reformulated into the following subproblem:
\begin{equation}\label{eq:formul_3.1}
\begin{aligned}
    & \max_{\boldsymbol{\theta}, \{\mathbf{A}_{k}\}} \quad \text{EE} \\
    & \text{s.t.} \quad \text{C2}, \text{C4}, \text{C6}, \\
    & \qquad\; \text{C14}: \mathbf{A}_{k} \succeq \mathbf{0}, \, \forall k, \quad \text{C15}: \text{Rank}(\mathbf{A}_{k}) = 1, \, \forall k, \\
    & \qquad\; \text{C16}: [\mathbf{A}_{k}]_{\ell(b,m),\ell(b,n)} = G_{k,b}(\boldsymbol{\theta}_b) e^{-j \Delta \psi_{k,b,m,n}}, \\
    & \qquad\qquad\quad\; \forall k,b,m,n, \\
    & \qquad\; \text{C17}: [\mathbf{A}_{k}]_{\ell(b,n_0),\ell(1,n_0)} = \sqrt{G_{k,b}(\boldsymbol{\theta}_b) G_{k,1}(\boldsymbol{\theta}_1)} \\
    & \qquad\qquad\quad\; \times \exp \big\{ -j \Delta \chi_{k,b} \big\}, \quad \forall k, b\neq1.
\end{aligned}
\end{equation}
where $G_{k,b}(\boldsymbol{\theta}_b)$ and $\mathbf{R}(\boldsymbol{\theta}_b)$ denote the element gain and rotation matrix, respectively. To extract the orientation-dependent responses from the lifted matrix $\mathbf{A}_k$, constraints C16 and C17 characterize the spatial mappings. Specifically, for antenna indices $m, n \in \{1, \dots, N_s\}$, C16 describes the intra-surface phase shift $\Delta \psi_{k,b,m,n} \triangleq (2\pi/\lambda) \mathbf{u}_k^T \mathbf{R}(\boldsymbol{\theta}_b) (\bar{\mathbf{r}}_m - \bar{\mathbf{r}}_n)$. Concurrently, C17 characterizes the inter-surface relative phase difference by taking the first surface as the reference position, yielding $\Delta \chi_{k,b} \triangleq (2\pi/\lambda) \mathbf{u}_k^T [ (\mathbf{q}_b + \mathbf{R}(\boldsymbol{\theta}_b)\bar{\mathbf{r}}_{n_0}) - (\mathbf{q}_1 + \mathbf{R}(\boldsymbol{\theta}_1)\bar{\mathbf{r}}_{n_0}) ]$.

Under the shortest-path rotation assumption, the rotation angle satisfies $\Delta\phi_b = \omega_b \tau_{\text{adj}} \le \pi$. Substituting this into C4 and exploiting the monotonicity of $\cos(\cdot)$ over $[0,\pi]$ yields $\text{Tr}(\mathbf{R}(\boldsymbol{\theta}_b)\mathbf{R}^T(\boldsymbol{\theta}_b^{(0)})) \ge 1 + 2\cos(\omega_b \tau_{\text{adj}}), \forall b$. Since the trace term remains non-convex in $\boldsymbol{\theta}_b$, we employ the MM method with a Lipschitz surrogate, with constant $L_{F,b}$ derived in Appendix B, yielding the following convex subset of C4:
\begin{equation}
\begin{aligned}
    \widehat{\text{C4}}:~& \mathcal{F}(\boldsymbol{\theta}_b^{(t_3)}) + \nabla_{\boldsymbol{\theta}_b} \mathcal{F}(\boldsymbol{\theta}_b^{(t_3)})^T (\boldsymbol{\theta}_b - \boldsymbol{\theta}_b^{(t_3)}) \\
    & \hspace{-2em} - \frac{L_{F,b}}{2} \|\boldsymbol{\theta}_b - \boldsymbol{\theta}_b^{(t_3)}\|_2^2 \ge 1 + 2\cos(\omega_b \tau_{\text{adj}}), \, \forall b \in \mathcal{B}.
\end{aligned}
\end{equation}

Moreover, by \textbf{Lemma \ref{lem:Rank_one}}, C15 admits the equivalent DC representation $\overline{\rm C15}\hspace{-1mm}:\hspace{-1mm} \|\mathbf{A}_{k}\|_* - \|\mathbf{A}_{k}\|_2 \le 0, \forall k$. By applying the decomposition in~\eqref{eq: Euler}, C16 and C17 can be reformulated as:
\begin{equation}
\label{eq:real_imag_decomp}
\resizebox{0.89\hsize}{!}{$
\begin{aligned}
    \text{C16a:} \; & \Re\{[\mathbf{A}_{k}]_{\ell(b,m),\ell(b,n)}\} = G_{k,b} \cos(\Delta \psi_{k,b,m,n}), \\
    \text{C16b:} \; & \Im\{[\mathbf{A}_{k}]_{\ell(b,m),\ell(b,n)}\} = -G_{k,b} \sin(\Delta \psi_{k,b,m,n}), \\
    \text{C17a:} \; & \Re\{[\mathbf{A}_{k}]_{\ell(b,n_0),\ell(1,n_0)}\} = \sqrt{G_{k,b}G_{k,1}} \cos(\Delta \chi_{k,b}), \\
    \text{C17b:} \; & \Im\{[\mathbf{A}_{k}]_{\ell(b,n_0),\ell(1,n_0)}\} = -\sqrt{G_{k,b}G_{k,1}} \sin(\Delta \chi_{k,b}).
\end{aligned}
$}
\end{equation}

Hence, the penalty form of problem~\eqref{eq:formul_3.1} is:
\begin{equation}\label{eq:formul_3.2}
\begin{aligned}
\underset{\boldsymbol{\theta}, \{\mathbf{A}_{k}\}}{\text{maximize}} \quad &\frac{(T - \tau_{\text{adj}}) \sum_{k=1}^{K} R_{k}(\mathbf{A})- \rho_{\theta} \mathcal{P}_{\theta}(\boldsymbol{\theta}, \mathbf{A}) }{E_{\text{total}}} \\
\text{s.t.} \quad & \widehat{\text{C4}}, \text{C6}, \text{C14}, \ \overline{\text{C15}}, 
\end{aligned}
\end{equation}
where 
\begin{equation*}
\begin{aligned}
    & \quad \mathcal{P}_{\theta}(\boldsymbol{\theta}, \mathbf{A}) \\
    &= \sum_{k, b, m \neq n} \Big( \big| \Re\{[\mathbf{A}_{k}]_{\ell(b,m),\ell(b,n)}\} - G_{k,b}\cos(\Delta \psi_{k,b,m,n}) \big|^2 \\
    & + \big| \Im\{[\mathbf{A}_{k}]_{\ell(b,m),\ell(b,n)}\} + G_{k,b}\sin(\Delta \psi_{k,b,m,n}) \big|^2 \Big) \\
    & + \sum_{k, b \ge 2} \Big( \big| \Re\{[\mathbf{A}_{k}]_{\ell(b,n_0),\ell(1,n_0)}\} - \sqrt{G_{k,b} G_{k,1}}\cos(\Delta \chi_{k,b}) \big|^2 \\
    & + \big| \Im\{[\mathbf{A}_{k}]_{\ell(b,n_0),\ell(1,n_0)}\} + \sqrt{G_{k,b} G_{k,1}}\sin(\Delta \chi_{k,b}) \big|^2 \Big).
\end{aligned}
\end{equation*}
Note that a sufficiently large penalty factor $\rho_{\theta} \gg 1$ guarantees the equivalence between problems \eqref{eq:formul_3.2} and \eqref{eq:formul_3.1}. To handle the non-convex penalty term $\mathcal{P}_{\theta}$, we employ the MM framework with the Lipschitz constant $L_{\Theta}$ derived in Appendix B to construct a valid global upper bound~\cite{conference}:
\begin{equation} \label{eq:penalty_surrogate_theta}
\begin{aligned}
    \tilde{\mathcal{P_{\theta}}}^{(t_3)}(\boldsymbol{\theta}, \mathbf{A}) 
    &= \mathcal{P_{\theta}}(\boldsymbol{\theta}^{(t_3)}, \mathbf{A}^{(t_3)})\\
    & \hspace{-2em}  + \langle \nabla_{\mathbf{A}}\mathcal{P_{\theta}}, \mathbf{A} - \mathbf{A}^{(t_3)} \rangle
     + \nabla_{\boldsymbol{\theta}}\mathcal{P_{\theta}}^T (\boldsymbol{\theta} - \boldsymbol{\theta}^{(t_3)}) \\[-0.2em]
     & \hspace{-2em} + \frac{L_{\Theta}}{2} \big( \|\mathbf{A} - \mathbf{A}^{(t_3)}\|_F^2 + \|\boldsymbol{\theta} - \boldsymbol{\theta}^{(t_3)}\|_2^2 \big).
\end{aligned}
\end{equation}
Following \eqref{eq:R_k_surrogate}, we apply MM together with a first-order Taylor approximation to problem \eqref{eq:formul_3.2}, yielding the concave surrogate ${R}_k^{(t_3)}$. The resulting convex approximation of C2 is denoted by $\widehat{\mathrm{C2}}: \frac{T - \tau_{\text{adj}}}{T}\tilde{R}_{k}^{(t_3)}(\mathbf{A}) \ge R_{\min}, \forall k$, where $\widehat{\mathrm{C2}} \Rightarrow \mathrm{C2}$. Similarly, we adopt the MM technique and a Lipschitz surrogate for the non-convex rotation energy $E_{\text{rot}}(\boldsymbol{\theta})$~as:
\begin{flalign}
\label{eq:energy_surrogate_theta}
\begin{aligned}
&\quad \tilde{E}_{\text{rot}}^{(t_3)}(\boldsymbol{\theta}) =  E_{\text{rot}}(\boldsymbol{\theta}^{(t_3)}) + \\
&\quad\nabla_{\boldsymbol{\theta}} E_{\text{rot}}(\boldsymbol{\theta}^{(t_3)})^T (\boldsymbol{\theta}-\boldsymbol{\theta}^{(t_3)}) + \frac{L_E}{2}\|\boldsymbol{\theta}-\boldsymbol{\theta}^{(t_3)}\|_2^2,
\end{aligned} &&
\end{flalign}
where the Lipschitz constant $L_E$ is derived in Appendix B, and the variables with superscript $(t_3)$ denote the corresponding solutions obtained at the $t_3$-th iteration. Hence the total energy is written as 
\begin{equation}
\label{eq:E_total_surrogate}
    \tilde{E}_{\text{total}}^{(t_3)} = \sum_{b=1}^{B} \left( E_{\text{move},b} + \tilde{E}_{\text{rot},b}^{(t_3)}(\boldsymbol{\theta}) \right) + (T - \tau_{\text{adj}}) P_{\text{tx}} + T P_s.
\end{equation}
By incorporating the Dinkelbach transform in Appendix A and the rate surrogate $\tilde{R}_k^{(t_3)}(\mathbf{A})$, the $(t_3+1)$-th penalty-based MM iteration can be represented as:
\begin{flalign}
\label{eq:formul3.4}
\begin{aligned}
    \underset{\boldsymbol{\theta}, \{\mathbf{A}_{k}\}}{\text{maximize}} \quad & (T - \tau_{\text{adj}}) \sum_{k=1}^{K} \tilde{R}_{k}^{(t_3)}(\mathbf{A}) \\
    &\quad - \rho_{\theta} \tilde{\mathcal{P}}_{\theta}^{(t_3)}(\boldsymbol{\theta}, \mathbf{A}) - \lambda^{(m)} \tilde{E}_{\text{total}}^{(t_3)} \\
    \text{s.t.} \quad &  \widehat{\text{C2}}, \widehat{\text{C4}}, \text{C6}, \text{C14}, \widehat{\overline{\text{C15}}}.
\end{aligned} &&
\end{flalign}
where $\widehat{\overline{\mathrm{C15}}} \Rightarrow \overline{\mathrm{C15}}$, which can be derived similarly to~\eqref{eq:C11_linear}, and the details are omitted for brevity. Therefore, problem~\eqref{eq:formul3.4} is convex and can be directly solved by CVX.

\subsection{Block 4: Optimization of Time Allocation}
With $\mathbf{W}$, $\mathbf{q}$, and $\boldsymbol{\theta}$ being fixed, the time-allocation variable $\tau_{\mathrm{adj}}$ only appears in constraints C2--C4. Define $\tau_{\min} = \max_{b} \left\{ \frac{\|\mathbf{q}_b - \mathbf{q}_b^{(0)}\|_2}{v_b}, \frac{\Delta\phi_b}{\omega_b} \right\}$ and $\tau_{\max} = \min_k \left\{ T \left( 1 - \frac{R_{\min}}{R_k} \right) \right\}$, which yields the feasible interval $\tau_{\mathrm{adj}}\in[\tau_{\min},\tau_{\max}]$. Then, the resulting subproblem is given by
\begin{equation} \label{eq:tau_subproblem}
\begin{aligned}
   \hspace{-1em} \max_{\tau_{\min} \le \tau_{\text{adj}} \le \tau_{\max}} ~ & J(\tau_{\text{adj}}) \!=\! (T \!-\! \tau_{\text{adj}}) R_{\text{sum}} - \lambda^{(m)} E_{\text{total}}(\tau_{\text{adj}}) \\
    \text{s.t.} ~ & \text{C2}, \text{C3}, \text{C4}, \text{C7}.
\end{aligned}
\end{equation}

Since $E_{\mathrm{mech}}$ is independent of $\tau_{\mathrm{adj}}$, the objective $J(\tau_{\mathrm{adj}})$ is affine in $\tau_{\mathrm{adj}}$ with derivative $\frac{\partial J}{\partial \tau_{\text{adj}}} = \lambda^{(m)} P_{\text{tx}} - R_{\text{sum}}$.
Hence, the optimum is attained at either $\tau_{\min}$ or $\tau_{\max}$, i.e.,
\begin{equation}
    \tau_{\text{adj}}^{\text{opt}} =
    \begin{cases}
    \tau_{\min}, & \text{if } R_{\text{sum}} > \lambda^{(m)} P_{\text{tx}}, \\
    \tau_{\max}, & \text{otherwise}.
    \end{cases}
    \label{eq:opt_tau}
\end{equation}

From a physical perspective, the optimal time allocation reflects the trade-off between throughput and energy consumption. Specifically, the system selects $\tau_{\min}$ when the achievable rate gain exceeds the corresponding transmit-power cost $\lambda^{(m)}P_{\mathrm{tx}}$; otherwise, it chooses $\tau_{\max}$ to reduce energy consumption while still satisfying the users' QoS requirements.
\begin{figure}[!t]
    \centering
  
    \includegraphics[width=0.8\columnwidth]{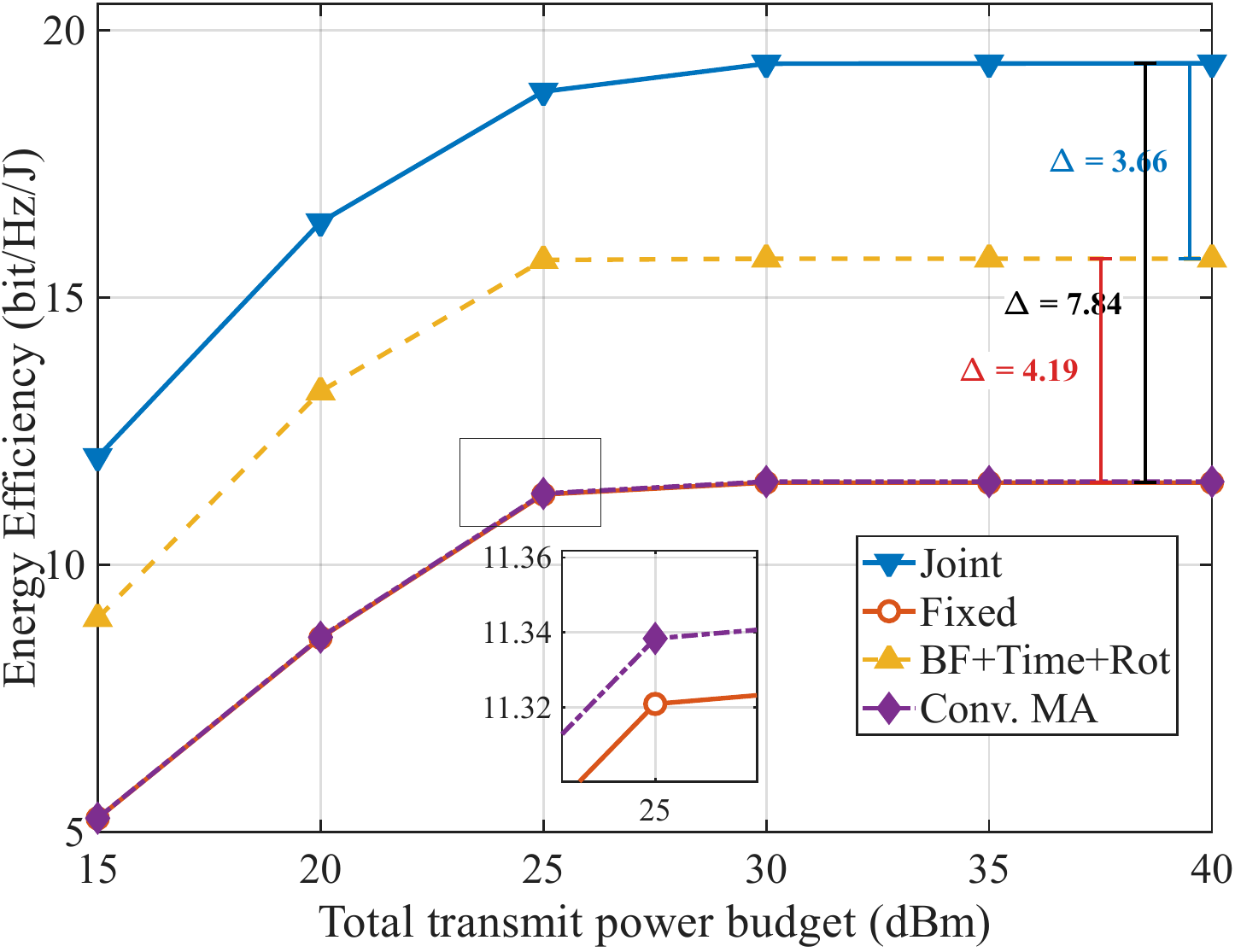} 

    \captionsetup{singlelinecheck=off, justification=raggedright}
    
    \caption{ EE versus $P_{\max}$ of different antenna systems.}
    \label{fig:ee_pmax}
  
\end{figure}
Moreover, the proposed BCD algorithm is guaranteed to converge to a suboptimal solution of~\eqref{prob:P1} with polynomial computational complexity~\cite{conference}. Specifically, the algorithm follows a four-block Dinkelbach-assisted BCD framework. The outer loop employs Dinkelbach’s transform to handle the fractional EE objective, while the inner loop successively updates the beamforming variables, antenna positions, antenna orientations, and time allocation via SDR- and MM-based optimization. Since each block update is designed to yield a non-decreasing EE and the EE is upper-bounded over the feasible set, the objective sequence generated by the algorithm converges monotonically. Therefore, the proposed algorithm monotonically converges to a suboptimal solution of~\eqref{prob:P1}.

\section{Simulation Results}
We evaluate the proposed 6DMA system at $f_c=6$ GHz. Unless otherwise specified, the system parameters are set as $B=4$, $N_s=4$, $K=4$, $\eta=2$, $P_s=1$ W, $R_{\min}=1$ bps/Hz, $T=1$ s, $v_{\max}=0.5$ m/s, $\omega_{\max}=2\pi$ rad/s, $\mathcal{V}=[-0.5,0.5]^3~\mathrm{m}^3$, $\alpha,\gamma\in[-\pi,\pi]$, $\beta\in(-\pi/2,\pi/2)$, $D_{\min}=0.10$ m, $l_0=5\times10^{-3}$ m, $M_{\mathrm{step}}=2.5\times10^{-3}$ N$\cdot$m, and $M_{\mathrm{rot}}=1\times10^{-2}$ N$\cdot$m~\cite{ding_energy_2025,wei_energy-efficient_2025}. Each movable surface is modeled as a $2\times2$ uniform planar array (UPA) with inter-element spacing $d_e=\lambda/2$, and the effective receiver noise power is set to $\sigma^2=-80$ dBm. The BS reference point is located at $(0,0,10)$ m above the user plane. The initial centers of the four antenna surfaces, defined relative to this reference point, are placed along the $x$-axis as $\mathbf{q}_b^{(0)}=[-0.45+0.30(b-1),0,0]^T$ m, which yields an inter-surface spacing of $0.30$ m. The initial orientation of all surfaces is set to $\boldsymbol{\theta}_b^{(0)}=[\pi,0,0]^T$, corresponding to a common downward boresight. The four users are located at $(80,20,0)$, $(80,-20,0)$, $(-80,20,0)$, and $(-80,-20,0)$ m. Accordingly, as seen from the BS reference point, the users lie approximately at azimuth angles $\pm14^\circ$ and $\pm166^\circ$. We adopt a LoS-dominated far-field channel model with a reference path loss of $48$ dB at 1 m and a path-loss exponent of $2$. The users are assumed to remain quasi-static within each optimization block. We compare the proposed design with three benchmarks: 1) \emph{Fixed}, which optimizes only the beamforming while keeping the antenna positions and orientations fixed; 2) \emph{BF+Time+Rot}, which jointly optimizes all variables except the antenna positions; and 3) \emph{Conv.~MA}, which jointly optimizes all variables except the 3D orientations.

\begin{figure}[!t]
    \centering
    \includegraphics[width=0.78\columnwidth]{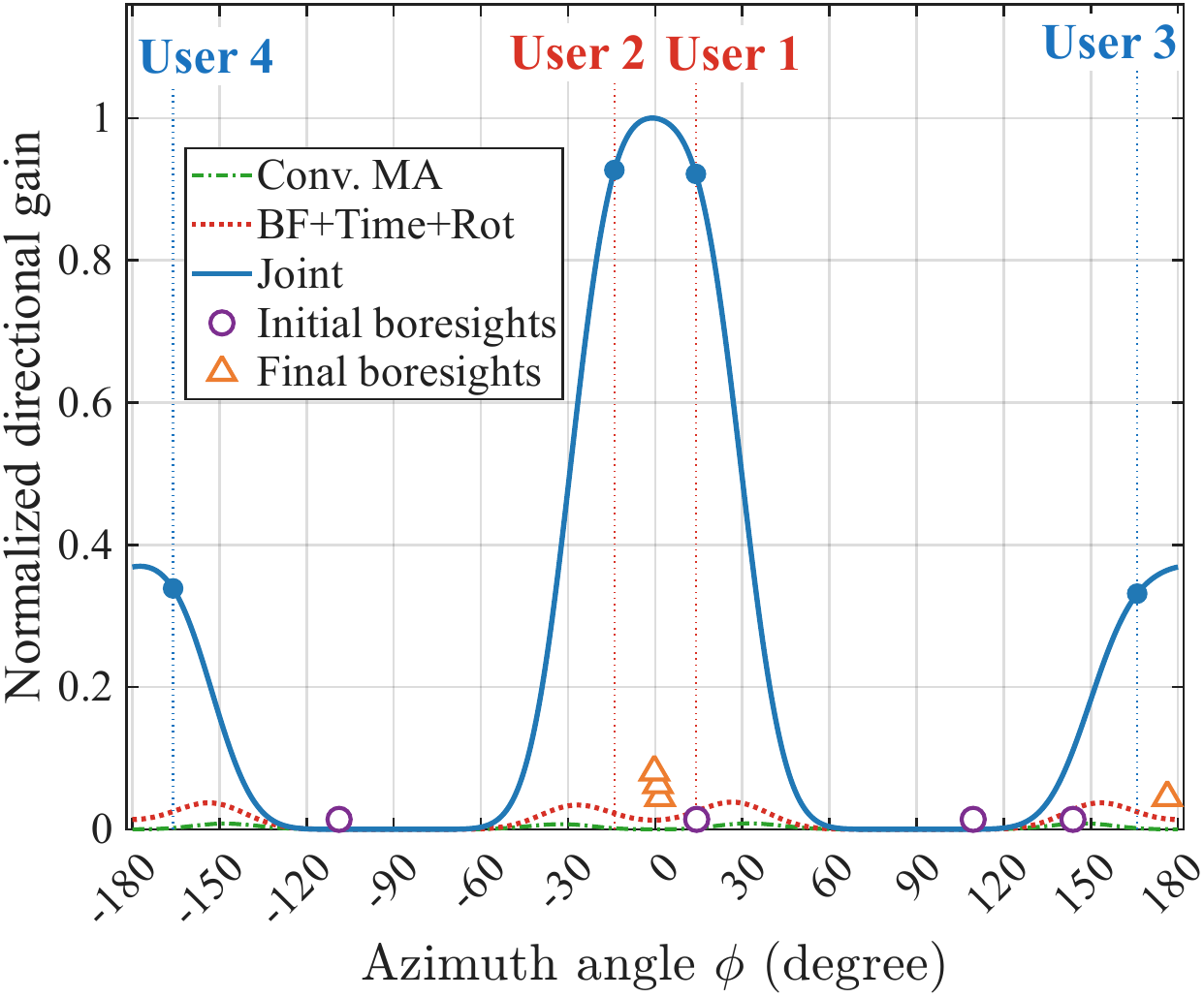} 
    \captionsetup{singlelinecheck=off, justification=raggedright}
\caption{Normalized azimuth directional-gain patterns at $P_{\max}=30$ dBm for a representative deployment. Azimuth is measured from the $+x$-axis.}
    \label{fig:gain_pattern}
    \vspace{-2mm}
\end{figure}
Fig.~\ref{fig:ee_pmax} shows the EE versus the maximum transmit-power budget, $P_{\max}$. In the low-power regime, the EE increases with $P_{\max}$ because a larger power budget enlarges the feasible beamforming set and improves the achievable sum rate. In the high-power regime, the EE gradually saturates, since the achievable rate increases only logarithmically with transmit power, whereas the total energy consumption remains non-negligible.  Furthermore, the proposed scheme consistently achieves the highest EE over the entire considered power range. This gain arises from the joint optimization of antenna positions and orientations, which offers the greatest flexibility for improving the effective channels under practical mechanical energy constraints. Moreover, \emph{BF+Time+Rot} outperforms \emph{Conv.~MA}, indicating that, in the considered LoS-dominated far-field scenario, orientation optimization is more beneficial to EE than position-only adjustment. This observation is consistent with~\eqref{eq:antenna_gain}: antenna rotation improves boresight alignment and thus the orientation-dependent element gain, whereas position optimization mainly refines inter-surface phase alignment. By jointly exploiting both mechanisms, the proposed scheme achieves the largest EE improvement, attaining a gain of $\Delta = 7.84$ bit/Hz/J over the fixed baseline.

Fig.~\ref{fig:gain_pattern} plots the normalized azimuth transmit gain at $P_{\max}=30$ dBm for the same deployment as in Fig.~\ref{fig:ee_pmax}. Compared with \emph{Conv.~MA}, the rotation-enabled schemes concentrate more radiated power toward the dominant user cluster, resulting in higher directional gain at the corresponding azimuth angles. The proposed joint design achieves the strongest directional focusing, which indicates that position optimization and orientation optimization play complementary roles in shaping the transmit pattern. In the considered LoS-dominated far-field setting, orientation optimization mainly enhances boresight alignment and hence the orientation-dependent element gain, while position optimization primarily improves coherent combining through inter-surface phase alignment. By jointly exploiting both effects, the proposed design strengthens the effective channels in the preferred directions, which helps explain its superior EE performance in Fig.~\ref{fig:ee_pmax}.

\section{Conclusion}

In this paper, we investigated the EE maximization problem for a multiuser 6DMA-enabled wireless network, taking into account a practical mechanical energy consumption model. By jointly optimizing the transmit beamforming, time allocation, and the 3D positions and orientations of the 6DMA antennas, we developed a BCD-based algorithm that integrates Dinkelbach’s transformation, SDR, the MM method, and a penalty framework to obtain a high-quality suboptimal solution. Numerical results demonstrated that the proposed design achieves significant EE gains over conventional fixed-position and mobility-constrained MA benchmarks. The results also revealed a fundamental trade-off between throughput enhancement and mechanical overhead, thus highlighting the importance of explicitly accounting for mechanical energy consumption in the practical design of the 6DMA system.

\appendices
\section*{Appendix A: Dinkelbach's Method}

To maximize the EE, we employ Dinkelbach's algorithm. Consider a generic convex fractional program:
\begin{equation}
\label{1}
    \max_{x \in \mathcal{X}} \frac{\mathcal{U}(x)}{\mathcal{E}(x)},
\end{equation}
where $\mathcal{U}(x)$ is a concave utility function, $\mathcal{E}(x)$ is a convex energy-consumption function, and $\mathcal{X}$ denotes the feasible convex set. This fractional problem can be transformed into a sequence of subtractive subproblems parameterized by $\lambda$:
\begin{equation} \label{eq:dinkelbach_obj}
    \max_{x \in \mathcal{X}} \quad \mathcal{U}(x) - \lambda \mathcal{E}(x).
\end{equation}
Given the solution $x^{(m)}$ to~\eqref{eq:dinkelbach_obj}, the update
\begin{equation}
    \lambda^{(m+1)} = \frac{\mathcal{U}(x^{(m)})}{\mathcal{E}(x^{(m)})}.
\end{equation}
guarantees convergence to the optimal solution of \eqref{1}.

\section*{Appendix B: Lipschitz Constants}
\label{app:lipschitz_derivation}

\subsection*{\underline{Lipschitz Constant $L_P$ for Position Variables (Block 2):}}

With $s \triangleq \frac{2\pi}{\lambda} \mathbf{u}_k^T (\mathbf{q}_b - \mathbf{q}_{b'})$ and $|p''(s)| \le 2$ via constraints C10 and C12, the Hessian spectral norm is governed by the Laplacian $\mathbf{L}_{\mathcal{E}}$. For a complete graph with $B$ surfaces, we have $\|\mathbf{L}_{\mathcal{E}}\|_2 = B$, summing over the $k$-th user, the resulting analytic Lipschitz constant is $L_P = 8\pi^2 K B / \lambda^2$.
\subsection*{\underline{Lipschitz Constant $L_{\Theta}$ for Orientation Variables (Block 3):}}
Let $\mathbf{a} \triangleq \mathrm{vec}_{\mathbb{R}}(\{\mathbf{A}_k\}_{k=1}^K)$ and define the residual $\mathbf{r}(\boldsymbol{\theta}, \mathbf{A}) \triangleq \mathbf{S}\mathbf{a} - \mathbf{b}(\boldsymbol{\theta})$, where $\mathbf{S}$ is the fixed selection matrix and $\mathbf{b}(\boldsymbol{\theta})$ stacks the right-hand-side terms in C16a--C17b. The penalty is $P_\theta(\boldsymbol{\theta}, \mathbf{A}) = \|\mathbf{r}(\boldsymbol{\theta}, \mathbf{A})\|_2^2$. Hence, with the stacked variable $\mathbf{x} \triangleq [\mathbf{a}^T, \boldsymbol{\theta}^T]^T$, the Hessian of $P_\theta$ with respect to $\mathbf{x}$ is
\begin{equation}
\label{eq:hessian_expansion}
    \nabla_{\mathbf{x}}^2 P_\theta(\mathbf{x}) = 2 \mathbf{J}_{\mathbf{r}}(\mathbf{x})^T \mathbf{J}_{\mathbf{r}}(\mathbf{x}) + 2 \sum_i r_i(\mathbf{x}) \nabla_{\mathbf{x}}^2 r_i(\mathbf{x}).
\end{equation}
where $\mathbf{J}_{\mathbf{r}}(\mathbf{x}) \triangleq \nabla_{\mathbf{x}} \mathbf{r}(\mathbf{x})$. Therefore, on $\mathcal{X}_\delta \triangleq \mathcal{A} \times \mathcal{T}_\delta$, the Lipschitz constant $L_{\Theta}$ can be bounded as
\begin{equation}\label{eq:analytic_Ltheta}
L_{\Theta} \triangleq \sup_{\mathbf{x} \in \mathcal{X}\delta} |\nabla{\mathbf{x}}^2 P_\theta(\mathbf{x})|2 \le 2\overline{J}^2 + 2\overline{R}\overline{H} < \infty,
\end{equation}
\begin{align*}
\label{eq:constants_def}
    \text{where} \quad \overline{J} &\triangleq \sup_{\mathbf{x} \in \mathcal{X}_\delta} \|\mathbf{J}_{\mathbf{r}}(\mathbf{x})\|_2, \quad
    \overline{R} \triangleq \sup_{\mathbf{x} \in \mathcal{X}_\delta} \|\mathbf{r}(\mathbf{x})\|_2, \\
    \text{and} \quad \overline{H} &\triangleq \sup_{\mathbf{x} \in \mathcal{X}_\delta} \left( \sum_i \|\nabla_{\mathbf{x}}^2 r_i(\mathbf{x})\|_2^2 \right)^{1/2}.
\end{align*}
Since $\mathbf{R}(\boldsymbol{\theta}_b)$ is smooth and $G_0(\phi_k,\vartheta_k,\boldsymbol{\theta}_b)=G(\mathbf{u}_k^T \mathbf{z}_b)^\eta$ on the branch $\mathbf{u}_k^T \mathbf{z}_b > 0$, $\mathbf{b}(\boldsymbol{\theta})$ is $C^2$ on $\mathcal{T}\delta$. Thus, the Lipschitz constant $L_\Theta$ is finite.

\subsection*{\underline{Lipschitz Constant $L_E$ for Rotation Energy:}}
The rotation energy $E_{\text{rot}}$ depends on the angular distance $\Delta\phi_b = \arccos\big(\frac{\text{Tr}(\mathbf{R}_{\Delta,b}) - 1}{2}\big)$. Although $\arccos(\cdot)$ exhibits singularities at $\pm 1$, physical rotation limits within $\tau_{\text{adj}}$ confine the Euler angles to a singularity-free domain. Thus, $L_E \ge \max_{\boldsymbol{\theta}} \|\nabla_{\boldsymbol{\theta}}^2 E_{\text{rot}}(\boldsymbol{\theta})\|_2$ is bounded. Backtracking line search is employed to ensure the majorization condition $\tilde{E}_{\text{rot}} \ge E_{\text{rot}}$.

\subsection*{\underline{Lipschitz Constant $L_{F,b}$ for the Trace Constraint:}}
For $\mathcal{F}_b(\boldsymbol{\theta}_b)\!\triangleq\!\operatorname{Tr}(\mathbf{R}(\boldsymbol{\theta}_b)\mathbf{C})$, with $\mathbf{C}=\mathbf{R}^T(\boldsymbol{\theta}_b^{(0)})$, the Hessian entries are
\begin{equation}
[\mathbf{H}_b]_{i,j}=\operatorname{Tr}\!\left(\frac{\partial^2 \mathbf{R}(\boldsymbol{\theta}_b)}{\partial \theta_{b,i}\partial \theta_{b,j}}\mathbf{C}\right).
\end{equation}
By Cauchy--Schwarz,
\[
|[\mathbf{H}_b]_{i,j}|
\le \Big\|\tfrac{\partial^2 \mathbf{R}}{\partial \theta_i \partial \theta_j}\Big\|_F \|\mathbf{C}\|_F
\le \sqrt{6}.
\]
Thus,
\begin{equation}\label{eq:analytic_L_Fb}
L_{F,b}=\|\mathbf{H}_b\|_2
\le \|\mathbf{H}_b\|_F
\le \sqrt{\sum_{i,j}|[\mathbf{H}_b]_{i,j}|^2}
\le 3\sqrt{6}.
\end{equation}



\begin{thebibliography}{10}
\providecommand{\url}[1]{#1}
\csname url@samestyle\endcsname
\providecommand{\newblock}{\relax}
\providecommand{\bibinfo}[2]{#2}
\providecommand{\BIBentrySTDinterwordspacing}{\spaceskip=0pt\relax}
\providecommand{\BIBentryALTinterwordstretchfactor}{4}
\providecommand{\BIBentryALTinterwordspacing}{\spaceskip=\fontdimen2\font plus
\BIBentryALTinterwordstretchfactor\fontdimen3\font minus \fontdimen4\font\relax}
\providecommand{\BIBforeignlanguage}[2]{{%
\expandafter\ifx\csname l@#1\endcsname\relax
\typeout{** WARNING: IEEEtran.bst: No hyphenation pattern has been}%
\typeout{** loaded for the language `#1'. Using the pattern for}%
\typeout{** the default language instead.}%
\else
\language=\csname l@#1\endcsname
\fi
#2}}
\providecommand{\BIBdecl}{\relax}
\BIBdecl

\bibitem{shi20256dma}
X.~Shi, X.~Shao, B.~Zheng, and R.~Zhang, ``{6DMA} {A}ided {C}ell-{F}ree {M}assive {MIMO} {C}ommunication,'' \emph{IEEE Wireless Commun. Lett.}, vol.~14, no.~5, pp. 1361--1365, May 2025.

\bibitem{zhang2026movableenergy}
Y.~Zhang, S.~Hu, D.~Mishra, and D.~W.~K. Ng, ``{Movable-Antenna Array-Enhanced Energy-Efficient {RSMA} Communication Networks},'' in \emph{Proc. IEEE ICC}, May. 2026, pp. 1--6.

\bibitem{shao_6d_2025-1}
X.~Shao, Q.~Jiang, and R.~Zhang, ``\BIBforeignlanguage{en-US}{{6D} {Movable} {Antenna} {Based} on {User} {Distribution}: {Modeling} and {Optimization}},'' \emph{\BIBforeignlanguage{en-US}{IEEE Trans. Wireless Commun.}}, vol.~24, no.~1, pp. 355--370, Jan. 2025.

\bibitem{shao_6dma_2025}
X.~Shao and R.~Zhang, ``\BIBforeignlanguage{en}{{6DMA} {Enhanced} {Wireless} {Network} with {Flexible} {Antenna} {Position} and {Rotation}: {Opportunities} and {Challenges}},'' \emph{\BIBforeignlanguage{en}{IEEE Communications Magazine}}, vol.~63, no.~4, pp. 121--128, Apr. 2025.

\bibitem{ding_energy_2025}
J.~Ding, Z.~Zhou, L.~Zhu, Y.~Zhao, B.~Jiao, and R.~Zhang, ``\BIBforeignlanguage{en-US}{Energy {Efficiency} {Maximization} for {Movable} {Antenna} {Communication} {Systems}},'' \emph{\BIBforeignlanguage{en-US}{IEEE Trans. Wireless Commun.}}, pp. 1--15, Sep. 2025.

\bibitem{wei_energy-efficient_2025}
X.~Wei, W.~Mei, X.~Huang, Z.~Chen, and B.~Ning, ``Energy-{Efficient} {Movable} {Antennas}: {Mechanical} {Power} {Modeling} and {Performance} {Optimization},'' \emph{arXiv preprint arXiv:2509.24433}, Sep. 2025.

\bibitem{spong_robot_2006}
M.~W. Spong, S.~Hutchinson, and M.~Vidyasagar, \emph{\BIBforeignlanguage{eng}{Robot {Modeling} and {Control}}}.\hskip 1em plus 0.5em minus 0.4em\relax Hoboken, N.J: Wiley, 2006.

\bibitem{murray_mathematical_1994}
R.~M. Murray, Z.~Li, S.~S. Sastry, and S.~Sastry, \emph{\BIBforeignlanguage{eng}{A {Mathematical} {Introduction} to {Robotic} {Manipulation}}}.\hskip 1em plus 0.5em minus 0.4em\relax Boca Raton, Fla London: CRC Press, 1994.

\bibitem{7913628}
X.~Yu, J.~Zhang, M.~Haenggi, and K.~B. Letaief, ``{Coverage Analysis for Millimeter Wave Networks: The Impact of Directional Antenna Arrays},'' \emph{IEEE J. Sel. Areas Commun.}, vol.~35, no.~7, pp. 1498--1512, July 2017.

\bibitem{hu_sum-rate_2025}
S.~Hu, R.~Zhao, Y.~Liao, D.~W.~K. Ng, and J.~Yuan, ``Sum-{Rate} {Maximization} for {Pinching} {Antenna}-{Assisted} {NOMA} {Systems} with {Multiple} {Dielectric} {Waveguides},'' in \emph{{IEEE} {Global} {Commun}. {Conf}. ({Globecom}) {Wkshps}.}, Dec. 2025, pp. 1--7.

\bibitem{conference}
R.~Zhao, S.~Hu, D.~Mishra, and D.~W.~K. Ng, ``{Resource Allocation for Multi-Waveguide Pinching Antenna-Assisted Broadcast Networks},'' in \emph{IEEE Global Commun. Conf. (Globecom) Wkshps.}, Dec. 2025, pp. 1--7.

\bibitem{blockage}
------, ``Robust and secure blockage-aware pinching antenna-assisted wireless communication,'' \emph{IEEE Trans. Mob. Comput.}, pp. 1--18, early access, May 22, 2026, doi: 10.1109/TMC.2026.3695952.

\bibitem{yu_irs-assisted_2021}
X.~Yu, D.~Xu, D.~W.~K. Ng, and R.~Schober, ``\BIBforeignlanguage{en-US}{{IRS}-{Assisted} {Green} {Communication} {Systems}: {Provable} {Convergence} and {Robust} {Optimization}},'' \emph{\BIBforeignlanguage{en-US}{IEEE Trans. Commun.}}, vol.~69, no.~9, pp. 6313--6329, Sep. 2021.

\bibitem{mairal_incremental_2015}
J.~Mairal, ``\BIBforeignlanguage{en-US}{Incremental {Majorization}-{Minimization} {Optimization} with {Application} to {Large}-{Scale} {Machine} {Learning}},'' \emph{\BIBforeignlanguage{en-US}{SIAM J. Optim.}}, vol.~25, no.~2, pp. 829--855, Feb. 2015.

\end{thebibliography}
\end{document}